\documentclass[11pt]{article}

\usepackage[margin=1in]{geometry}
\usepackage[T1]{fontenc}

\usepackage[T1]{fontenc}
\usepackage{newpxtext}
\usepackage{mathpazo}
\usepackage{amsmath, amsthm, amsfonts, amssymb}
\usepackage{mathtools}
\usepackage{tikz}

 \usepackage[full]{complexity}
 \usepackage{hyperref}
 \usepackage{comment,bbm}
\usetikzlibrary{matrix}

\usepackage[capitalize]{cleveref}

\renewcommand{\epsilon}{\varepsilon}
\newcommand{\e}{\varepsilon}
\DeclareMathOperator*{\qp}{quasi-poly}
\DeclareMathOperator*{\polyn}{poly}
\renewcommand{\poly}{\polyn}
\newcommand{\cG}{\mathcal{G}}

\newtheorem{theorem}{Theorem}[section]
\newtheorem{lemma}[theorem]{Lemma}
\newtheorem{claim}[theorem]{Claim}

\newcommand{\calB}[0]{\mathcal{B}}

\DeclareMathOperator{\wt}{wt}
\DeclareMathOperator{\diag}{diag}

\newtheorem{corollary}[theorem]{Corollary}

\newtheorem{definition}[theorem]{Definition}

\usepackage{nicematrix}

\DeclarePairedDelimiterX{\Set}[2]{\{}{\}}{\,{#1}\,:\,{#2}\,}

\newcommand{\Z}{\mathbb{Z}}

\newcommand{\calI}[0]{\mathcal{I}}

\DeclareMathOperator*{\rank}{rank}

\title{Deterministic Constructions of Reed--Solomon Codes
Beyond the Johnson Bound}

\author{Abhranil Chatterjee\thanks{Indian Institute of Technology Kharagpur. Email: \texttt{abhranilc@cse.iitkgp.ac.in}} 
\and
Sumanta Ghosh\thanks{Indian Statistical Institute, Kolkata. Email: \texttt{besusumanta@gmail.com}} 
\and
Zeyu Guo\thanks{The Ohio State University. Email: \texttt{zguotcs@gmail.com}}
\and
Rohit Gurjar\thanks{Indian Institute of Technology Bombay. Email: \texttt{rgurjar@iitb.ac.in}.} 
\and
Roshan Raj\thanks{The Ohio State University. Email: \texttt{raj.144@osu.edu}.} 
\and 
Zihan Zhang \thanks{Simons Institute for the Theory of Computing, Berkeley and Institute for Advanced Study, Princeton. Email: \texttt{ zzhsdj@foxmail.com}.}}
\date{}

\begin{document}

\maketitle  
\begin{abstract}
We give a deterministic quasipolynomial-time construction of
Reed--Solomon codes that are list-decodable beyond the Johnson
radius.
Specifically, for every $\varepsilon\in(0,1]$ and all sufficiently
large $n$, we construct an $[n,k]$ Reed--Solomon code of rate
$R=\Omega(\varepsilon/(1+\log(1/\varepsilon)))$ over a field of size
$(nk)^{\operatorname{quasi\text{-}poly}(nk)}$ that is
$(1-\varepsilon,O(1/\varepsilon))$ list-decodable.
The construction runs in
$\operatorname{quasi\text{-}poly}(nk)$ time.

We also give a deterministic construction of $[n,k]$-MDS$(3)$
Reed--Solomon codes for all integers $1\leq k<n$, with the same
asymptotic bounds on field size and construction time.
In the constant-rate regime, our field-size bound substantially
improves the bound of $O\left(n^{k^{2k}}\right)$ achieved by Roth's construction (Roth, IEEE TIT 2022).

Our constructions exploit a connection to polynomial identity
testing for read-once determinants, together with the
quasipolynomial isolating-weight construction of Gurjar and
Thierauf for linear matroid
intersection (Gurjar and Thierauf, STOC 2017 and Computational Complexity 2020).
\end{abstract}
\newpage
\section{Introduction} 

Reed--Solomon (RS) codes are among the most fundamental and widely studied families of error-correcting codes.
A message is represented by a low-degree polynomial over a finite field $\mathbb F_q$, and redundancy is introduced by evaluating it at more points than are needed to determine it.
Formally, for distinct elements $\alpha_1,\dots,\alpha_n\in\mathbb F_q$, called the evaluation points, the corresponding $[n,k]$ RS code is
\[
 \mathrm{RS}_{n,k}(\alpha_1,\dots,\alpha_n)
 :=\{(f(\alpha_1),\dots,f(\alpha_n)):
 f\in\mathbb F_q[X],\ \deg f<k \}.
\]
It has rate $R=k/n$ and minimum distance $d=n-k+1$, attaining the \emph{Singleton bound} $d\leq n-k+1$ for linear codes \cite{Singleton64}.
Thus, RS codes achieve the optimal tradeoff between rate and minimum distance, and hence between rate and unique-decoding radius $\lfloor(d-1)/2\rfloor$.
This optimality, together with their simple algebraic structure, has led to numerous applications in both theory and practice.

\paragraph*{List decoding and the generalized Singleton bound} 
One limitation of unique decoding is that its decoding radius must be strictly less than half the minimum distance.
List decoding, introduced independently by Elias~\cite{Elias57} and Wozencraft~\cite{Wozencraft58}, relaxes this model by allowing the decoder to return a short list of candidate messages that includes the transmitted message.
This relaxation makes it possible to tolerate substantially more errors.
Beyond their role in error correction, list decoding and the related notion of list-recovery have become important tools in complexity theory, pseudorandomness, and cryptography.
Applications include worst-case to average-case reductions~\cite{CPS99}, hardness amplification~\cite{STV01,IJK09}, and explicit constructions of pseudorandom generators and hitting set generators~\cite{SU05,Uma03,Uma09} as well as randomness extractors~\cite{Tre01,TZ04,TZS06,SU05,GUV09}.
Cryptographic applications include hard-core bits~\cite{GL89,AGS03} and traitor-tracing schemes~\cite{SSW03}.

Formally, a code $C\subseteq\mathbb F_q^n$ is said to be
\emph{(combinatorially) $(\rho,L)$ list-decodable} if every Hamming ball
of radius $\rho n$ in $\mathbb F_q^n$ contains at most $L$ codewords of
$C$. Equivalently, for every $y\in\mathbb F_q^n$ and every choice of
$L+1$ distinct codewords $c_1,\dots,c_{L+1}\in C$,
\[
\max_{1\leq i\leq L+1} d_H(c_i,y)>\rho n,
\]
where $d_H$ denotes the Hamming distance.
The stronger notion of \emph{average-radius list-decodability}
replaces the maximum distance with the average distance:
$C$ is \emph{average-radius $(\rho,L)$ list-decodable} if
\[
\frac{1}{L+1}\sum_{i=1}^{L+1} d_H(c_i,y)>\rho n
\]
for every $y\in\mathbb F_q^n$ and every choice of $L+1$ distinct
codewords $c_1,\dots,c_{L+1}\in C$.
In both definitions, $\rho$ is called the (relative) list-decoding radius,
and $L$ is called the list size.

One natural question is to determine the best possible tradeoff
between the list-decoding radius and the rate.
Assuming that $\rho n$ is an integer, the
\emph{generalized Singleton bound}~\cite{ST23,Rot22} states that
an $[n,k]$ linear code over $\mathbb F_q$ of rate $R=k/n$ can be
$(\rho,L)$ list-decodable\footnote{Here we assume $q>L$ and $L\leq k<n$;
see~\cite[Theorem~3 and Corollary~5(a)]{Rot22}.} only if
\[
  \rho \leq \frac{L}{L+1}(1-R).
\]
The same upper bound applies to average-radius list decoding,
since it is a stronger requirement.
When $L=1$, it recovers the unique-decoding benchmark $(1-R)/2$
implied by the classical Singleton bound.

\paragraph*{The Johnson bound}
The generalized Singleton bound gives an upper bound on the
list-decoding radius. On the other hand, the classical
\emph{Johnson bound}~\cite{Joh62,GS01} gives a lower bound that
depends only on the minimum distance.
Specifically, a code of relative minimum distance $\delta\in[0,1]$
is list-decodable at any radius $\rho<J(\delta)$, where $J(\delta):=1-\sqrt{1-\delta}$,
with list size bounded solely in terms of the gap $J(\delta)-\rho$.
Since an $[n,k]$ RS code has relative minimum distance $(n-k+1)/n$,
its Johnson radius is $1-\sqrt{(k-1)/n}$, or approximately
$1-\sqrt R$. In comparison, the generalized Singleton bound
approaches $1-R$ as the list size grows.

Whether there exist RS codes that are list-decodable beyond the
Johnson radius was a longstanding open problem, even for randomly
chosen evaluation points and even at the combinatorial level.
Shangguan and Tamo~\cite{ST23} conjectured that random RS codes
over sufficiently large fields not only beat the Johnson bound
but attain the generalized Singleton bound for every fixed list
size $L$.
A long line of work~\cite{RW14,ST23,DBLP:journals/siamcomp/GuoLSTW24,FKS22,GST23}
established increasingly strong list-decoding guarantees beyond
the Johnson radius.
Brakensiek, Gopi, and Makam~\cite{BGM24} ultimately resolved
the conjecture, showing that random RS codes over sufficiently
large fields attain the generalized Singleton bound with high
probability, even in the stronger average-radius sense.
In particular, random RS codes achieve \emph{list-decoding capacity},
tolerating a $1-R-\varepsilon$ fraction of errors with list size
$O(1/\varepsilon)$.

The result of Brakensiek, Gopi, and Makam uses fields of exponential
size in the code length.
Guo and Zhang~\cite{GZ23} subsequently showed that fields of size
$O_{\varepsilon}(n^2)$ suffice for achieving list-decoding capacity
with list size $O(1/\varepsilon)$.
Alrabiah, Guruswami, and Li~\cite{AGL24} further reduced the field
size to $O_{\varepsilon}(n)$ while preserving this list-size bound.
Thus, the combinatorial list-decodability of RS codes with random
evaluation points is now well understood.

\paragraph*{Deterministic constructions}
Random RS codes can surpass the Johnson bound and even attain
the generalized Singleton bound. This motivates the search for
efficient deterministic constructions of evaluation points
$\alpha_1,\ldots,\alpha_n$ with comparable guarantees, including
in small characteristic.\footnote{
We use the term ``deterministic construction'' rather than
``explicit construction,'' which is often understood to require
polynomial running time. The running times of our constructions
are stated explicitly.
}
Beyond its coding-theoretic significance, this problem provides
a concrete setting for developing derandomization techniques
for structured algebraic problems.

This construction problem is closely connected to several other
questions in coding theory.
Brakensiek, Gopi, and Makam~\cite{BGM22,BGM24} established
equivalences, up to duality, between optimal average-radius
list-decoding guarantees and several other properties.
One is the higher-order MDS property, denoted
$\operatorname{MDS}(\ell)$: intersections of $\ell$ column spans
of a generator matrix must have the same dimensions as the
corresponding intersections for a generic matrix.
Another asks that a fixed code realize every admissible zero
pattern of the corresponding order through a suitable choice
of generator matrix, connecting this question to the GM-MDS
literature~\cite{DSY14,Lov18,YH19}.
The same framework also characterizes maximal recoverability
of tensor codes with a single parity check per column:
such codes must recover from every erasure pattern that can
be corrected by some tensor code with the same row and column
parameters~\cite{GHKSSY17,BGM22}.
The equivalence among these properties allows deterministic constructions to be
transferred between the different settings, so progress on one
construction problem can yield progress on the others.

Shangguan and Tamo~\cite{ST23} gave a deterministic construction of RS codes
attaining the generalized Singleton bound for list sizes two and
three, although their field sizes are doubly exponential in the
constant-rate regime.
Roth~\cite{Rot22} reduced the field-size requirement for list size
two to polynomial in $n$ when the code dimension $k$ is fixed.
Combined with the main result of Brakensiek, Gopi, and
Makam~\cite{BGM24}, the field-extension method of Shangguan and
Tamo also yields deterministic constructions of
$[n,k]$-$\operatorname{MDS}(\ell)$ RS codes for arbitrary order
$\ell$, over fields of size $2^{k^n}$.
Via duality, these provide optimal average-radius list-decoding
guarantees for all list sizes up to $\ell-1$.
Building on the methods of Shangguan--Tamo and Roth,
Brakensiek, Dhar, and Gopi~\cite{BDG23} obtained constructions
over fields of size $n^{(\ell k)^{O(\ell k)}}$.
This gives polynomial-size fields when $k$ and $\ell$ are fixed,
but the bound remains doubly exponential in the constant-rate
regime.

In a recent breakthrough, Brakensiek, Chen, Putterman,
Zhang, and Zheng~\cite{BCPZZ26} showed that RS codes over sufficiently
large prime fields can be efficiently list decoded up to capacity
for every choice of distinct evaluation points, with polynomial
list size.
This removes the need to choose special evaluation points in
that regime, but relies on large characteristic and does not
provide the optimal constant-list-size guarantees available
for random RS codes.
This motivates the problem of obtaining deterministic constructions
with small-list guarantees in arbitrary characteristic while
keeping both the field size and the construction time small.
\subsection{Main Results}

Our main contribution is a deterministic quasipolynomial-time
construction of evaluation points that simultaneously satisfy a
large family of determinantal nonvanishing conditions arising
in the study of list decoding of RS codes.
This yields RS codes that are list-decodable beyond the Johnson
radius. 
More precisely, we prove the following theorem.

\begin{theorem}\label{thm:beyondJohnson}
For every $\e\in(0,1]$ and all sufficiently large integers $n$,
there is a deterministic algorithm running in $\qp(nk)$ time
that constructs an $[n,k]$ RS code of rate $=\frac{k}{n}=\Omega( \frac{\e}{\log(1/\e)+1})$ over a field of size at most $(nk)^{\qp(nk)}$ that is
$(1-\e,O(1/\e))$ list-decodable.
\end{theorem}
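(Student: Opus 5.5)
The plan is to follow the intersection-matrix framework for list decoding of RS codes, developed in \cite{ST23,DBLP:journals/siamcomp/GuoLSTW24,GZ23}. Suppose the code is not $(1-\e,L)$ list-decodable. Then there are a received word $y$ and $L+1$ distinct polynomials $f_1,\dots,f_{L+1}$ of degree $<k$, each agreeing with $y$ on a set $I_j\subseteq[n]$ with $|I_j|\ge \e n$.

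\textbf{Step 1: reduction to determinants.} Agreement gives linear relations $f_a(\alpha_i)=f_b(\alpha_i)$ for $i\in I_a\cap I_b$. We encode these, together with a spanning-tree choice of pairs (as in \cite{GZ23}), into a matrix whose columns carry the coefficients of the differences $f_a-f_b$. The rows are Vandermonde-type rows $(1,\alpha_i,\dots,\alpha_i^{k-1})$ placed in block positions. The distinct $f$'s give a nonzero kernel vector, so some square submatrix $M_{\mathcal I}(\alpha)$ is singular. Conversely, it suffices to choose $\alpha_1,\dots,\alpha_n$ so that every such square intersection matrix arising from the combinatorial configurations is nonsingular. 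Here we invoke the result of \cite{GZ23} and its follow-ups: when $k\le c\,\e n/\log(1/\e)$ and $L=O(1/\e)$, every relevant configuration $\mathcal I$ yields a matrix with nonzero determinant as a polynomial in formal variables $X_1,\dots,X_n$. That symbolic statement is exactly what random RS codes use. The rate bound $\Omega(\e/(\log(1/\e)+1))$ is the regime where this symbolic nonsingularity holds with a small number of intersection-matrix types.

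\textbf{Step 2: isolating weights for the determinants.} Each determinant, expanded in the $X_i$, is a read-once-type determinant: every variable $X_i$ appears in a structured set of rows, and after the substitution $\alpha_i=$ powers of a generator, the determinant is a sum over perfect matchings / common bases of two linear matroids (a row matroid and a column matroid). We show that each such $\det M_{\mathcal I}$ can be written as the Cauchy--Binet type sum over common bases of a linear matroid intersection instance with parameter size $\poly(nk)$. Then we apply the Gurjar--Thierauf quasipolynomial isolating weight family \cite{GT17}: a deterministic list of $\qp(nk)$ weight assignments $w$, one of which isolates a unique minimum-weight common basis simultaneously for all instances of the given size. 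Substituting $X_i\mapsto t^{w(i)}$ (more precisely, substituting $\alpha_i=\gamma^{w(i)}$ with degree-$\qp(nk)$ exponents) keeps every determinant a nonzero univariate polynomial, because the isolated term cannot cancel. The degree in $t$ is $D=\qp(nk)$.

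\textbf{Step 3: field choice.} We take $\mathbb F_q=\mathbb F_p(\gamma)$ with $\gamma$ a root of an irreducible polynomial of degree exceeding $\max D$. Alternatively, we try all weight assignments and combine them via a further substitution. Every nonzero univariate determinant then stays nonzero at $\gamma$, and the $\alpha_i=\gamma^{w(i)}$ are distinct. This gives field size $p^{\qp(nk)}=(nk)^{\qp(nk)}$. Irreducible polynomials of that degree are found deterministically in $\qp$ time by Shoup's algorithm.

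\textbf{Main obstacle.} The hard step is Step 2. We must show that the intersection-matrix determinants genuinely have a matroid-intersection structure to which the Gurjar--Thierauf isolation applies uniformly, so that one weight family handles all exponentially many configurations $\mathcal I$ simultaneously. Each $X_i$ recurs as $X_i^j$ across many rows, so the matrix is not literally read-once. My first attempt would be to introduce fresh variables $X_{i,j}$ for the monomials, making it read-once. I would then argue that isolation for the relaxed determinant transfers after resubstituting $X_{i,j}=X_i^j$. To do this I would choose weights with $w(i,j)=j\cdot w(i)$-compatible structure, or use a Vandermonde factorization, so that the minimum term survives.
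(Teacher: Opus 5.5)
Your outline (reduce to nonvanishing of intersection-matrix determinants, read those determinants via Cauchy--Binet as sums over common bases, isolate, then pass to an extension field) matches the paper's overall architecture. However, the step you flag as the ``main obstacle'' is the paper's main technical contribution, and your proposal does not supply it.

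For the substitution to be realizable by evaluation points, you need isolating weights on the ground set $[k]\times J$ of the special form $w_{i,j}=(i-1)\widetilde w_j$. Only then is $x_{i,j}\mapsto z^{w_{i,j}}$ the same as $x_{i,j}=\alpha_j^{i-1}$ with $\alpha_j=z^{\widetilde w_j}$. The Gurjar--Thierauf family does not have this form, and it is not clear how to obtain such a family for two arbitrary linear matroids. The missing idea has two parts.

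First, rewrite each determinant as $\det(U_1X_JU_2^{\mathsf T})$, where $U_1$ represents the partition matroid $\mathcal M_{k,J}$ with parts $[k]\times\{j\}$. This uses two ingredients:
\begin{itemize}
\item the condition $I_a\cap I_b\cap I_c=\emptyset$ in the sufficient condition of \cref{lem:EvalProp}, which your Step 1 omits; without it a column of $V$ appears in several blocks and the matrix is not read-once;
\item a column elimination that removes the constant $\mathcal B_{t'}\otimes I_k$ block.
\end{itemize}

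Second, exploit that partition matroid inside the Gurjar--Thierauf framework, where it suffices to give nonzero circulation to any $|E|^4$ simple cycles of any face. For $\mathcal M_{k,J}$, every simple cycle splits into pairs $(h,j),(h',j)$ with $h\neq h'$ and pairwise distinct $j$. Under $u_{i,j}=(i-1)k^{j-1}$, its circulation is therefore $\bigl|\sum (h-h')k^{j-1}\bigr|$, which is nonzero because $0<|h-h'|<k$. Reducing $k^{j-1}$ modulo a suitable small modulus $p\le \tilde O(ns)$ handles $s$ cycles at once with polynomially bounded weights. The layered combination $\sum_i N^{\ell-i}u^i$ is linear, so it preserves the product form. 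Writing ``choose $w(i,j)=j\cdot w(i)$-compatible weights'' names the target but does not say how to hit it.

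There are also errors in the surrounding steps.
\begin{itemize}
\item The Gurjar--Thierauf guarantee is per instance: for each matroid pair, \emph{some} member of the family isolates. No single member isolates all configurations simultaneously, so a single substitution $\alpha_i=\gamma^{w(i)}$ cannot work. In addition, those weights live on $[k]\times J$, not on $[n]$.
\item Your ``alternatively'' route is the one that works, but it must be made precise. Interpolate all $D_y$ members into $\mathcal G_j(y,z)$ with $\mathcal G_j(a_p,z)=z^{\widetilde w^p_j}$, so that $f(y,z)\neq 0$ whenever the determinant is nonzero. Then apply a Kronecker substitution and evaluate at an element of degree at least $D$ over $\mathbb F_q$.
\item Different supports $J$ must be reduced to the case $J=[n]$ by padding with direct summands.
\item Distinctness of the $\alpha_j$ is not automatic. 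It follows by applying the preservation property to the read-once determinant $x_{1,a}x_{2,b}-x_{2,a}x_{1,b}$, which evaluates to $\alpha_b-\alpha_a$.
\end{itemize}
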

In particular, for sufficiently small $\varepsilon>0$ and all
sufficiently large $n$, the construction satisfies
$(k-1)/n>\varepsilon^2$. Hence its list-decoding radius
$1-\varepsilon$ exceeds the Johnson radius
$1-\sqrt{(k-1)/n}$.

Our second result concerns higher-order MDS codes.
We first recall their formal definition.
\begin{definition}[MDS$(\ell)$ code~\cite{BGM24}]
Let $\ell\geq 1$ be an integer, and let $C$ be an $[n,k]$ linear
code over $\mathbb F$ with generator matrix
$G\in\mathbb F^{k\times n}$.
Let $W$ be a $k\times n$ matrix whose $(i,j)$-th entry is a variable $x_{i,j}$.
For $A\subseteq[n]$, let $G_A$ and $W_A$ denote the subspaces
spanned by the columns indexed by $A$, over $\mathbb F$ and
over the function field $\mathbb F(\{x_{i,j}\}_{i\in [k],j\in [n]})$, respectively.
We say that $C$ is an \emph{MDS$(\ell)$ code} if
\[
  \dim\left(\bigcap_{i=1}^{\ell}G_{A_i}\right)
  =
  \dim\left(\bigcap_{i=1}^{\ell}W_{A_i}\right)
\]
for every choice of subsets $A_1,\dots,A_\ell\subseteq[n]$.
\end{definition}

The MDS$(1)$ and MDS$(2)$ properties are both equivalent to
the ordinary MDS property, so MDS$(3)$ is the first nontrivial
higher-order case.
Even in this case, a substantial gap remains between the known
lower bounds on the required field size and the field sizes
achieved by deterministic constructions.
Our second result gives a deterministic quasipolynomial-time
construction of MDS$(3)$ RS codes and, in the constant-rate
regime, improves the field-size bound of
$O\left(n^{k^{2k}}\right)$ achieved by Roth~\cite{Rot22}.

\begin{theorem}\label{thm:MDS3}
For all integers $1\leq k<n$, there is a deterministic algorithm
running in $\qp(nk)$ time that constructs an $[n,k]$ RS code
that is MDS$(3)$ over a field of size at most $(nk)^{\qp(nk)}$.
\end{theorem}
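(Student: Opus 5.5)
We reduce the MDS$(3)$ property to the simultaneous nonvanishing of a family of structured polynomials in the evaluation points, and then derandomize using read-once determinant PIT and Gurjar--Thierauf isolating weights.

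\textbf{Step 1: algebraic reduction.} By the characterization of Brakensiek, Gopi, and Makam~\cite{BGM24}, and in the form used by Roth~\cite{Rot22} for $\ell=3$, an $[n,k]$ RS code with evaluation points $\alpha_1,\dots,\alpha_n$ is MDS$(3)$ exactly when a certain family of polynomials $P_{\mathcal A}(\alpha_1,\dots,\alpha_n)$ does not vanish. The family is indexed by admissible triples $\mathcal A=(A_1,A_2,A_3)$ of subsets of $[n]$. Each $P_{\mathcal A}$ is the determinant of a block matrix of size $O(k)$. Its blocks are partial Vandermonde matrices $V_{A_i}$ with rows $(1,\alpha_j,\dots,\alpha_j^{k-1})$ for $j\in A_i$, glued along identity-type blocks that encode the intersections of the spans. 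In the generic setting these determinants are nonzero as polynomials; this is the BGM theorem that generic RS codes are MDS$(\ell)$. So the task is to find a single point $\alpha\in\mathbb F^n$ that avoids the zero sets of all $P_{\mathcal A}$ simultaneously. There are at most $2^{O(n)}$ such conditions.

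\textbf{Step 2: substitution into a read-once determinant.} Substitute $\alpha_j=\sum_{t} w_t(j)\,y^{e_t(j)}$, or more directly apply a Kronecker/isolating-weight substitution to the symbolic matrix. Expanding each Vandermonde-type determinant via Cauchy--Binet rewrites $P_{\mathcal A}$ as a sum over bases of a linear matroid intersection instance. Equivalently, it is a read-once determinant $\det(\sum_j x_j B_j)$ with rank-one or partition structure, after introducing fresh variables $x_{i,j}$ for the entries and then specializing $x_{i,j}=\alpha_j^{i}$.

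The key point is that nonvanishing of $P_{\mathcal A}$ after specialization follows from the existence of a \emph{unique} minimum-weight common base. Gurjar--Thierauf give a quasipolynomial-size family of weight assignments $w:[n]\times[k]\to\mathbb Z$ with entries bounded by $\qp(nk)$, and some $w$ in the family is isolating for every linear matroid intersection instance on this ground set. The quantifier order is what matters here: one weight must work simultaneously for all instances $\mathcal A$. I would like to obtain this from the fact that their construction is isolating for all matroid-intersection polytopes of a given ground set at once, through its cycle-elimination argument.

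\textbf{Step 3: lifting weights to field elements.} This is the step I expect to be the main obstacle. Isolating weights on the pairs (point, exponent) must be realized by actual field elements $\alpha_j$ whose powers $\alpha_j^{i}$ carry the weight $w(j,i)$. A single element's powers cannot have independent weights. I would therefore work over a tower: take $\mathbb F=\mathbb F_p(\beta)$ with $\beta$ of huge degree, and set $\alpha_j=\beta^{N_j}$. The integers $N_j$ are chosen, for example as sums of $w$-encoded digits in base $M=\qp(nk)$, so that the lowest-order term in the $\beta$-adic expansion of each $P_{\mathcal A}(\beta^{N_1},\dots,\beta^{N_n})$ corresponds to the isolated common base. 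That term then has a nonzero coefficient $\pm1$ and cannot cancel.

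For this to work, the weight structure $i\cdot N_j$ must be shown compatible with the Gurjar--Thierauf isolation. I would try to exploit the fact that the Vandermonde-type instances only need isolation within the class of weights that are linear in the exponent $i$. One would then verify that the Gurjar--Thierauf scheme, built by iterated reduction of circuit counting, can be carried out inside this restricted class, perhaps after a monotone reindexing of the $x_{i,j}$.

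\textbf{Step 4: field size and running time.} Finally, choose the degree $D$ of $\beta$ larger than the maximal degree $k\cdot\max_j N_j=\qp(nk)$, so that no wraparound occurs. This gives $|\mathbb F|=p^{D}\le (nk)^{\qp(nk)}$. We then try each of the quasipolynomially many weight candidates. Each candidate is verified in $\qp(nk)$ time, or is guaranteed correct if the Gurjar--Thierauf family is used with a fixed good index.
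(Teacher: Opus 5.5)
There is a genuine gap. In fact there are two, located exactly where you flag difficulty.

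\emph{First, the quantifier problem in Steps 2--3 is not resolved.} The Gurjar--Thierauf guarantee is ``for every instance there is a good member of the family.'' It is not ``there is a member good for every instance.'' The choice of $u^0,\dots,u^\ell$ in their iterated construction is made relative to the face reached so far, which is instance-specific, so there is no fixed good index.

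Hence $\alpha_j=\beta^{N_j}$ with a single integer vector $N$ fails either way:
\begin{itemize}
\item If $N$ is one member of the family, it breaks for some $\mathcal A$.
\item If $N_j$ packs all members in base $M$, then $N_j\approx M^{|\mathcal W|}=2^{\qp(nk)}$, so the extension degree and field size become $2^{2^{\qp(nk)}}$. Isolation is not even inherited by such a lexicographic combination, since a weight isolating on the whole polytope need not isolate on a subface.
\end{itemize}
Testing candidates against all $2^{\Omega(n)}$ conditions is also not quasi-polynomial.

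The paper never needs a single weight. It Lagrange-interpolates the whole family into $\mathcal G_j(y,z)=\sum_s L_s(y)z^{\widetilde{w^s}_j}$. For each instance, the substituted polynomial $f(y,z)$ is then nonzero, because its specialization at the good $y=a_p$ is nonzero. It then Kronecker-substitutes to a univariate polynomial over $\mathbb F_q$ of degree $<D=\qp(nk)$. Finally it sets $\alpha_j=\mathcal G_j(\beta,\beta^{nkD_y+1})$ with $\beta$ of degree $r\geq D$ over $\mathbb F_q$. Such a $\beta$ is a non-root of every nonzero polynomial over $\mathbb F_q$ of degree $<r$, which handles all instances at once. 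Note that the resulting $\alpha_j$ are not monomials in $\beta$.

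\emph{Second, compatibility of isolation with the Vandermonde form $w_{i,j}=(i-1)\widetilde w_j$ is left as a hope.} The missing idea is structural.
\begin{itemize}
\item First apply the Brakensiek--Dhar--Gopi reduction to pairwise disjoint $A_1,A_2,A_3$, for every $k'\leq k$, with padding. Row operations then give $\det N_{V,(A_1,A_2,A_3)}=\det(U_1X_JU_2^{\mathsf T})$, where $U_1$ represents the partition matroid $\mathcal M_{k,J}$ with rank-one parts $Q_j=[k]\times\{j\}$. Without disjointness a column index $j$ appears in several blocks and the determinant is not read-once, so your Step 2 assertion needs this reduction.
\item Since one matroid is this partition matroid, the face description of Gurjar--Thierauf forces every simple cycle to split into consecutive pairs $e_{2i-1},e_{2i}$ lying in distinct parts $Q_{\phi(i)}$.
\item Consequently $u_{i,j}=(i-1)k^{j-1}$ has circulation $\bigl|\sum_i(h_{i_1}-h_{i_2})k^{\phi(i)-1}\bigr|\neq 0$ on every simple cycle.
\item Reducing $k^{j-1}$ modulo small primes keeps the form $(i-1)\widetilde u_j$ with small entries. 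The combination $\sum_t N^{\ell-t}u^t$ preserves that form as well.
\end{itemize}
Without these two ingredients, the interpolation and generic-point step and the partition-matroid cycle structure, the proposal does not yield the theorem.
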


\paragraph{Independent and concurrent work.}
Our results were obtained independently of concurrent work by
Alrabiah, Goyal, and Guruswami~\cite{AGG26}.\footnote{A version
of this paper containing our results was submitted to STACS~2027
before~\cite{AGG26} appeared on ECCC.}
They give deterministic constructions of Reed--Solomon codes
that achieve list-decoding capacity with optimal list size.
For every fixed gap $\eta>0$, their construction gives codes
of rate $R$ that are
$(1-R-\eta,O((1-R)/\eta))$ list-decodable, with field size
and construction time $n^{O_\eta(1)}$. They also give
deterministic list-decoding algorithms. In particular, for
every fixed $\varepsilon>0$, their results give
$(1-\varepsilon,O(1/\varepsilon))$ list-decodable RS codes
of rate $\Theta(\varepsilon)$, improving the rate, field size,
and construction time in \cref{thm:beyondJohnson}.

Our approach is different.  We use isolating weights for
matroid intersection to construct a Vandermonde substitution
that preserves nonzeroness for a class of read-once determinant
polynomials (\cref{thm:RSreadonce}). This also gives the exact MDS(3)
construction in \cref{thm:MDS3}, which is not established by the
fixed-gap list-decoding guarantees of~\cite{AGG26}.

\subsection{Proof Idea}
Both of our main results follow from a common polynomial identity
testing problem. The known determinantal characterizations of
list-decodable and higher-order MDS RS codes \cite{BGM22, BGM24} reduce the
desired coding-theoretic properties to preserving the nonzeroness of
certain determinant polynomials. After elementary row or column
operations, these polynomials take the form
\[
\det(U_1X_JU_2^{\mathsf T}),
\]
where $X_J$ is diagonal and the first matrix $U_1$ represents a
partition matroid  defined below (or, see \cref{def:partitionmatroid}). Since each variable occurs only once in $X_J$,
these are called read-once determinant polynomials.
\begin{definition}
    Let $k,n$ be positive integers and $J$ be a subset of $[n]$. We define $\mathcal{M}_{k,J}$ to be the partition matroid of rank $|J|$ on the ground set $[k]\times J$, whose $|J|$ parts are
$Q_j:=\{(i,j)\mid i\in[k]\}$, where $j\in J$,
with each part having rank one. 
\end{definition}

Precisely, we show the following theorem, which we later use for our deterministic construction of RS codes.
\begin{theorem}\label{thm:RSreadonce}
Let $k,n$ be positive integers with $k\leq n$, and let $X$ be the
$kn\times kn$ diagonal matrix whose rows and columns are indexed by
$[k]\times[n]$, with $X[(i,j),(i,j)]=x_{i,j}$ for every $(i,j)\in[k]\times[n]$.
For a subset $J$ of $[n]$, let $X_J$ denote the  principal submatrix of $X$ with rows and columns indexed by $[k]\times J$.

Then there exist positive integers $C,D\leq (nk)^{O(\log(nk))}$ such that the
following holds. For every prime power $q\geq C$ and every positive
integer $r\geq D$, there exist pairwise distinct
$
\alpha_1,\ldots,\alpha_n\in\mathbb{F}_{q^r}
$ such that, for every subset $J$ of $[n]$ and every  pair of $|J|\times k|J|$ matrices
$U_1,U_2$ over $\mathbb{F}_q$
whose columns are indexed by $[k]\times J$ with $U_1$ representing the matroid $\mathcal{M}_{k,J}$, if 
$\det(U_1X_JU_2^{\mathsf T})$ is not the zero polynomial, then it evaluates to a nonzero element of
$\mathbb{F}_{q^r}$ under the substitution
\[
x_{i,j}=\alpha_j^{\,i-1}
\text{ for every }(i,j)\in[k]\times J.
\]
Moreover, such $\alpha_1,\alpha_2,\dots, \alpha_n$ can be constructed in $\qp(nk)\cdot \poly(qr)$ time.
\end{theorem}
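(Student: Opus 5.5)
By the Cauchy--Binet formula, writing $m=|J|$ and $E=[k]\times J$,
\[
\det(U_1X_JU_2^{\mathsf T})=\sum_{S\subseteq E,\ |S|=m}\det(U_1[S])\det(U_2[S])\prod_{(i,j)\in S}x_{i,j}.
\]
Every monomial is multilinear, so distinct $S$ give distinct monomials. The polynomial is therefore nonzero exactly when some $S$ is a common base of the linear matroid $M_1=\mathcal{M}_{k,J}$ (represented by $U_1$) and the linear matroid $M_2$ represented by $U_2$. This is exactly the setting of Gurjar--Thierauf. They construct, deterministically, a family of $(nk)^{O(\log nk)}$ weight assignments $w:[k]\times[n]\to\Z_{\ge0}$ with entries at most $(nk)^{O(\log nk)}$. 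For every pair of linear matroids on any subset of the ground set (over any field), at least one of these weights isolates a minimum-weight common base. Under such an isolating $w$, the substitution $x_{i,j}\mapsto t^{w(i,j)}$ yields a nonzero univariate polynomial whose lowest-degree coefficient is $\det U_1[S^*]\det U_2[S^*]\neq0$. Since $U_1,U_2$ range over infinitely many matrices, I will use the Gurjar--Thierauf family, which is universal over all matroid pairs, and not a weight tailored to one instance.

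The main obstacle is that our substitution is constrained to $x_{i,j}=\alpha_j^{i-1}$. I cannot put an arbitrary weight on each $(i,j)$. The exponent must be $(i-1)\cdot(\text{something depending on }j)$. Here the partition-matroid structure of $M_1$ helps. A common base $S$ contains exactly one element $(i_j,j)$ from each part $Q_j$, so the monomial becomes $\prod_{j\in J}\alpha_j^{i_j-1}$. I will take $\alpha_j$ to be a formal evaluation: let $\alpha_j=\gamma^{a_j}$ for an element $\gamma$ of large multiplicative order, or more directly use several variables. The key step is to reduce the Gurjar--Thierauf weight to one of the form $w(i,j)=(i-1)\,c_j$. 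My plan is to use the freedom to reorder: apply row operations to $U_2$ and column-index relabelings within each $Q_j$, which leave $M_1$ unchanged. These do not change which weights exist, though. So instead I will prove an isolation statement directly for weights $w(i,j)=(i-1)c_j$, with the $c_j$ ranging over a family of size $(nk)^{O(\log nk)}$. To do this I will rerun the Gurjar--Thierauf argument. That argument shows that if two minimum common bases exist, their symmetric difference contains a cycle in the exchange graph with nonzero circulation, and it kills such cycles in $O(\log)$ rounds, each round enforcing nonzero circulation on short cycles. For a cycle alternating between elements of the same parts $Q_j$, the circulation under $w=(i-1)c_j$ is $\sum_j c_j(\text{net change in }i\text{ within }Q_j)$. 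This is a linear form in the $c_j$ with small integer coefficients. If it is not identically zero, a standard prime-modulus hashing of $c_j=\beta^{j}\bmod p$ makes it nonzero for polynomially many cycles at a time. If it is identically zero in every part, then the cycle returns to the same $i$ in each part, and I must argue that such cycles can be shortcut. Checking this compatibility is the step I expect to be hardest.

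Given an isolating family of such $c$'s, I assemble the $\alpha$'s. Take $\mathbb{F}_{q^r}$ with $r\ge D$ and let $\gamma$ be a root of an irreducible polynomial of degree $r$. I will avoid the multiple-weights problem by combining the weights into one through a fresh variable per weight, then specialize. Concretely, set $\alpha_j=\sum_{\ell}\gamma^{\ell}\,\beta_\ell^{c^{(\ell)}_j}$ style combinations, or more simply $\alpha_j=\gamma^{c_j}$ for a single combined weight obtained by concatenating the $O(\log)$ rounds as Gurjar--Thierauf do with a base $B$ large. Once $w$ isolates, the substituted polynomial in $\gamma$ has degree at most $k\cdot n\cdot\max c\le(nk)^{O(\log nk)}\le D\le r$. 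Its coefficients lie in $\mathbb{F}_q$, and it is nonzero, so its value at $\gamma$ is nonzero because $\gamma$ has degree $r$ over $\mathbb{F}_q$. Here $q\ge C$ is needed to have enough room for the hashing primes or the Schwartz--Zippel-style choice. Distinctness of the $\alpha_j$ requires distinct $c_j$, which I will enforce by adding a small distinguishing term. Finally, all steps run in $\qp(nk)\cdot\poly(qr)$ time: construct $\mathbb{F}_{q^r}$, enumerate the candidate weights, and compute powers of $\gamma$.
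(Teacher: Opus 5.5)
Your Cauchy--Binet reduction to common bases of $\mathcal{M}_{k,J}$ and the matroid of $U_2$ matches the paper. So does your insistence on weights of the form $w_{i,j}=(i-1)c_j$, and so does your hashing $c_j=\beta^{j}\bmod p$. However, the proposal leaves both load-bearing steps open.

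\textbf{The isolation step.} You note that the circulation under $(i-1)c_j$ is an integer linear form in the $c_j$. You then defer the case where this form vanishes identically to an unspecified shortcutting argument. The missing idea is that this case never arises for the cycles that matter.

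Gurjar--Thierauf's Lemma~3.19 (\cref{lem:isolationset}) can be used as a black box. It only needs a family $\mathcal{U}$ giving nonzero circulation to any $|E|^4$ \emph{simple} cycles with respect to a face $F$. Its combination $\sum_t N^{\ell-t}u^t$ is linear, so it preserves the form $(i-1)c_j$.

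When one matroid is $\mathcal{M}_{k,J}$, the face partition $\mathcal{P}_1$ of \cref{lem:FaceMI} is forced to be the blocks $Q_j\setminus Q'_j$ plus singletons of the elements with $x_e\equiv 0$ on $F$. The reason is that $x(Q_j)=1=r_1(Q_j)$ is tight on $F$, so $Q_j$ is a union of blocks. A block of positive integral value inside $Q_j$ then has value $1$, so it contains every element of $Q_j$ not forced to $0$.

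Consequently a simple cycle meets each $Q_j$ either not at all or in exactly one consecutive pair $(h,j),(h',j)$ with $h\neq h'$, and different pairs lie in different parts. The circulation is therefore $\bigl|\sum_j (h_j-h'_j)c_j\bigr|$ with $0<|h_j-h'_j|<k$. This is nonzero for $c_j=k^{j-1}$, and reducing modulo a small prime handles $|E|^4$ cycles at once. This is the content of \cref{obs:simplecyclepartitionmatroid} and \cref{lem:contructU}; without it your isolation claim is unproved.

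\textbf{The assembly step.} This does not work as written. Which weight in the family isolates depends on $U_2$, because the Gurjar--Thierauf concatenation chooses $u^t$ round by round according to the current face. So ``$\alpha_j=\gamma^{c_j}$ for a single combined weight'' cannot serve all $U_2$ at once.

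In fact no single weight of the form $(i-1)c_j$ with subexponential entries can. Take any two transversals $S\neq S'$ of the parts, put a cyclic $0/1$ pattern on $S\triangle S'$, and rescale one column. This gives a $U_2$ whose only common bases with $\mathcal{M}_{k,J}$ are $S$ and $S'$, with cancelling coefficients. Hence all $k^{|J|}$ transversals would need distinct weights.

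A fresh variable per weight does not help either: it leaves quasi-polynomially many variables, and collapsing them by Kronecker substitution makes the degree exponential in $|\mathcal{W}|$.

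The paper instead interpolates over the whole family in one extra variable:
\begin{itemize}
\item It picks distinct nodes $a_1,\ldots,a_{D_y}\in\mathbb{F}_q$; this, not hashing, is what $q\ge C$ is for.
\item It sets $\mathcal{G}_j(y,z)=\sum_s L_s(y)z^{\widetilde{w^s}_j}$ and substitutes $x_{i,j}=\mathcal{G}_j(y,z)^{i-1}$. Setting $y=a_p$ recovers $z^{(i-1)\widetilde{w^p}_j}$, so the resulting $f(y,z)$ is nonzero for every admissible $U_2$.
\item It maps $y\mapsto t$, $z\mapsto t^{nkD_y+1}$ and evaluates at an element $\beta$ of degree $r\ge D$. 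This gives $\alpha_j=\mathcal{G}_j(\beta,\beta^{nkD_y+1})$.
\end{itemize}

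Finally, drop the distinguishing term, which could break isolation. Distinctness follows from the proven property applied to $x_{1,a}x_{2,b}-x_{2,a}x_{1,b}$, which evaluates to $\alpha_b-\alpha_a$.
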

To prove \cref{thm:RSreadonce}, we first use the Cauchy--Binet formula for the determinant of product of two rectangular matrices to
identify the monomials in the support of
$\det(U_1X_JU_2^{\mathsf T})$ with the common bases of two linear
matroids represented by the columns of $U_1$ and $U_2$, respectively. Thus, for an isolating weight vector $w\in \Z^{[k]\times J}$ for the corresponding matroid-intersection instance (see \cref{def:isolating}), the substitution $x_{i,j}\mapsto t^{w_{i,j}}$ prevents cancellation among the monomials, and preserves the nonzeroness of $\det(U_1X_JU_2^{\mathsf T})$.

The work of Gurjar and Thierauf~\cite{DBLP:journals/cc/GurjarT20} provides a construction of isolating weight assignment for common bases of linear matroid intersections. However, we cannot use their construction directly. The issue is that it does not immediately have the special form
required by the Vandermonde substitution
$x_{i,j}=\alpha_j^{i-1}$. For this, the weights must have the form \(w_{i,j}=(i-1)\widetilde w_j\), so that the substitution \(x_{i,j}=z^{w_{i,j}}\) can be realized by setting the RS evaluation point \(\alpha_j=z^{\widetilde w_j}\). In general, for any two arbitrary linear matroids, it is not clear to us how to construct isolating weight assignment from ~\cite{DBLP:journals/cc/GurjarT20} that is compatible with Vandermonde substitution. However, for our purposes, we observe that among the two matroids, one  is the special partition matroid defined above. This helps us to construct an isolating weight assignment of our desired form. 

One of our main technical steps is to construct a quasipolynomial-size family $\mathcal W$ of weight vectors whose weights are quasi-polynomially large and have the form $w_{i,j}=(i-1)\widetilde w_j$
with the following property: For any two matroids having a common base, one of which is $\mathcal M_{k,J}$, the family $\mathcal W$ contains an isolating
weight vector for their common bases.
We then combine all weights in this family via polynomial interpolation and obtain a bivariate
 map. Then, we evaluate this bivariate polynomial map over a sufficiently large extension field.  

Finally, in \cref{sec:RSLD,sec:RSMDS3}, we show that the determinant
polynomials appearing in the known characterizations of list
decoding (\cref{lem:intersection-matrix-factorization}) and MDS$(3)$ codes (\cref{lem:3MDS-matrix-factorization}) belong to the class covered by
\cref{thm:RSreadonce}.

\section{Preliminaries and Notation}
For a matrix  $U$ with indexed rows and columns, and for subsets $S$ of the row indices and $T$ of the column indices, we denote by $U[S,T]$  the submatrix of $U$ consisting of the rows indexed by $S$ and the columns indexed by $T$. $U[T]$ denotes the submatrix $U[R,T]$ where $R$ is  the entire row-index set.

For matrices $A\in\mathbb{F}^{m_1\times n_1}$ and
$B\in\mathbb{F}^{m_2\times n_2}$, we denote by $A\oplus B$ their
direct sum, defined as
\[
A\oplus B:=
\begin{bmatrix}
A & 0\\
0 & B
\end{bmatrix}
\in\mathbb{F}^{(m_1+m_2)\times(n_1+n_2)}.
\]
When the rows and columns are indexed, we list the row and column
indices of $A$ before those of $B$.
\paragraph*{Matroids} A \emph{matroid} $\mathcal M=(E,\mathcal I)$ consists of a finite ground set $E$ and a nonempty family $\mathcal I\subseteq 2^E$ satisfying the following two properties:
\begin{enumerate}
    \item \textit{Closure under subsets:} If $S\in \mathcal{I}$ and $T\subseteq S$, then $T\in \mathcal{I}$.
    \item \textit{Augmentation Property:} If $|S|>|T|$ and $S,T\in \mathcal{I}$, then there exists $e\in S\setminus T$ such that $T\cup \{e\}\in \mathcal{I}$.
\end{enumerate}
The set family $\calI$ is called the \emph{independent set family} of $\mathcal{M}$. The augmentation property ensures that all the maximal independent sets of $\mathcal{M}$ have the same size. The collection $\calB$ of all the maximal independent sets is called the \emph{base family} of $\mathcal{M}$. The base family $\calB$ satisfies the following property:  
\begin{center}
\textit{Base exchange property:} Let $B,B'\in\calB$. Then for all $a\in B\setminus B'$ there exists a $b\in B'\setminus B$ such that $B-a+b$ is in $\calB$. 
\end{center}
Given the base family $\calB$ of a matroid $\mathcal{M}$, its independent set family is determined as $\calI=\{I\,\mid\,\exists B\in\calB,\  I\subseteq B\}$. Therefore, a matroid $\mathcal{M}$ can be represented as $(E,\calI)$ or $(E,\calB)$. In this work, we mostly use $\mathcal{M}=(E,\calB)$ to represent a matroid. Every matroid $\mathcal{M}$ is associated with a function, $\rank:2^E\to \mathbb{N}$, defined as
\[
\rank(S)= \max \{|T|\,\mid\, T\subseteq S,\ T\in \mathcal{I}\}.
\]
 The rank of the ground set $E$ is called the rank of the matroid $\mathcal{M}$. It is equal to the cardinality of the bases. For more details on matroids, one can see some excellent textbooks like \cite{Oxl06, Sch03B}.
\paragraph*{Linear matroids} A well-known class of matroids is that of \emph{linear matroids}. A linear matroid over a field $\mathbb{F}$ is represented by an $r\times m$ matrix $U$ over the field $\mathbb{F}$ with the full row rank. Assume that the columns are indexed by $[m]$, which is the ground set of the matroid. Let $\calB=\{B\subseteq [m]\,\mid\,|B|=r,\ \det(U_B)\neq 0\}$. It is not hard to prove that $\mathcal{M}=([m],\calB)$ is a matroid with $\calB$ as the base family. We say that $U$ \emph{represents} the matroid $\mathcal{M}$ over field $\mathbb{F}$.

\paragraph*{Partition matroids}\label{def:partitionmatroid} Another class of matroids crucial for our results is that of
\emph{partition matroids}. A matroid $\mathcal M$ on a ground set
$E$ is a partition matroid if there exist a partition
$Q_1\sqcup\cdots\sqcup Q_n=E$ and nonnegative integers
$h_1,\ldots,h_n$, with $h_j\leq |Q_j|$ for every $j\in[n]$,
such that the bases of $\mathcal M$ are precisely the subsets
$B\subseteq E$ satisfying $|B\cap Q_j|=h_j$
for every $j\in[n]$.
In particular, the rank of each part $Q_j$ is $h_j$.

\paragraph*{Matroid intersection} Let $\mathcal{M}_1=(E,\calB_1)$ and $\mathcal{M}_2=(E,\calB_2)$ be two matroids defined on the same ground set $E$. The problem of finding a common base is called the \emph{matroid intersection} problem. The problem of perfect matching for bipartite graphs and many other problems can be viewed as special cases of the matroid intersection problem.

\paragraph*{Matroid intersection polytope} Let $\mathcal{M}_1$ and $\mathcal{M}_2$ be two matroids of rank $r$ on a common ground set $E$ with rank function $r_1$ and $r_2$, respectively. Then, the convex hull of characteristic vectors of common bases in $\mathbb{R}^{|E|}$, denoted by $P(\mathcal{M}_1,\mathcal{M}_2)$, is described by the following linear program \cite{Sch03B}. 
\[x_e\geq 0 \text{ for each } e\in E, x(S)=\sum_{e\in S}x_e\leq r_i(S) \text{ for each } S\subseteq E, i\in \{1,2\}, \text{ and } x(E)=r. \]

\subsection{Isolating Weight Assignment for Matroid Intersection}

Throughout this paper, faces of a polytope are assumed to be nonempty,
and the entire polytope is allowed as a face.

For a weight vector $w\in \Z^E$, the \emph{weight} of a subset $S\subseteq E$, denoted by $w(S)$, is $\sum_{e\in S}w_e$.

\begin{definition}\label{def:isolating} Let $\mathcal{M}_1$ and $\mathcal{M}_2$ be two matroids defined on a common ground set $E$ and  $w\in \mathbb{Z}^E$ be a weight vector. $w$ is called \emph{isolating} for $\mathcal{M}_1$ and $\mathcal{M}_2$ if there exists a unique maximum weight common base in $\mathcal{M}_1$ and $\mathcal{M}_2$ with respect to $w$. A family of weight vectors $\mathcal{W}\subseteq \Z^{E} $ is called \emph{isolating} for $\mathcal{M}_1,\mathcal{M}_2$ if there exists $w\in \mathcal{W}$ that is isolating for $\mathcal{M}_1,\mathcal{M}_2$. \end{definition}

The following lemma \cite[Lemma 3.5]{DBLP:journals/cc/GurjarT20} gives a characterization for the tight inequalities in the LP for faces of the matroid intersection polytope.
\begin{lemma}\label{lem:FaceMI}
Let $\mathcal{M}_1$ and $\mathcal{M}_2$ be two matroids defined on a common ground set $E$ that have a common base and have rank functions $r_1$ and $r_2$, respectively. Let $F$ be a face of $P(\mathcal{M}_1,\mathcal{M}_2)$. Then, there exist two partitions $\mathcal{P}_1,\mathcal{P}_2$ of $E$ such that for any $S_1\in \mathcal{P}_1$ and $S_2\in \mathcal{P}_2$, there exists $\nu_{S_1},\mu_{S_2}\in \mathbb{N}$ such that \[x(S_1)=\nu_{S_1} \text{ and } x(S_2)=\mu_{S_2} \text{ for every } x\in F.\]
Moreover,\begin{enumerate}
    \item If for some $T\subseteq E$, $x(T)=r_1(T)$ for all $x\in F$, or $x(T)=r_2(T)$ for all $x\in F$, then $T$ is the union of sets from $\mathcal{P}_1$, respectively $\mathcal{P}_2$,
    \item If for some $e\in E, x_e=0 $ for all $x\in F$, then  there is a $S_1\in \mathcal{P}_1$ and $S_2\in \mathcal{P}_2$ such that $S_1=S_2=\{e\}$ and $\nu_{S_1}=\mu_{S_2}=0$.
\end{enumerate}
\end{lemma}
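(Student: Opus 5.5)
The plan is the standard uncrossing argument: build each partition $\mathcal P_i$ from the family of sets that are tight for $r_i$ on the whole face $F$, and then refine by splitting off the coordinates that vanish on $F$. Fix $i\in\{1,2\}$. Let $\mathcal T_i:=\{T\subseteq E: x(T)=r_i(T)\text{ for all }x\in F\}$.

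\textbf{Step 1: $\mathcal T_i$ is a lattice of sets containing $\emptyset$ and $E$.} Clearly $\emptyset\in\mathcal T_i$. Also $E\in\mathcal T_i$, because every point of the polytope satisfies $x(E)=r$ and $r_i(E)=r$. For closure, take $A,B\in\mathcal T_i$ and any $x\in F$. Since $x$ is modular and $r_i$ is submodular,
\[
x(A\cup B)+x(A\cap B)=x(A)+x(B)=r_i(A)+r_i(B)\ge r_i(A\cup B)+r_i(A\cap B).
\]
On the other hand, the LP gives $x(A\cup B)\le r_i(A\cup B)$ and $x(A\cap B)\le r_i(A\cap B)$. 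Hence both of these inequalities are equalities. As $x\in F$ was arbitrary, $A\cup B$ and $A\cap B$ both lie in $\mathcal T_i$.

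\textbf{Step 2: define the partition by equivalence classes.} Call $e$ and $f$ equivalent if every $T\in\mathcal T_i$ contains both or neither. Let $\mathcal P_i'$ be the resulting partition of $E$. By construction, every $T\in\mathcal T_i$ is a union of classes, which gives item~1 for $\mathcal P_i'$. It remains to show that $x(S)$ is constant on $F$ for each class $S$.

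For $e\in E$, let $T_e$ be the intersection of all members of $\mathcal T_i$ containing $e$. This family is nonempty since it contains $E$, so $T_e\in\mathcal T_i$ by Step~1. Let $U_e$ be the union of all $T\in\mathcal T_i$ with $T\subseteq T_e$ and $e\notin T$; then $U_e\in\mathcal T_i$ as well. I claim that the class of $e$ equals $T_e\setminus U_e$.
\begin{itemize}
\item Let $f\in T_e\setminus U_e$, and suppose some $T\in\mathcal T_i$ contains $f$ but not $e$. Then $T\cap T_e\in\mathcal T_i$ is contained in $T_e$, misses $e$, and contains $f$. This forces $f\in U_e$, a contradiction. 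Conversely, every tight set containing $e$ contains $T_e\ni f$. So $f$ is equivalent to $e$.
\item Let $f$ be equivalent to $e$. Then $f$ lies in every tight set containing $e$, so $f\in T_e$. Each tight set in the union defining $U_e$ misses $e$, hence misses $f$, so $f\notin U_e$.
\end{itemize}
Since $U_e\subseteq T_e$, for every $x\in F$ we get $x(T_e\setminus U_e)=r_i(T_e)-r_i(U_e)$. This is a constant nonnegative integer, and it serves as $\nu_S$ (respectively $\mu_S$).

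\textbf{Step 3: item~2.} Refine $\mathcal P_i'$ into $\mathcal P_i$ by splitting off $\{e\}$ as its own part for every $e$ with $x_e=0$ on all of $F$. A part $S$ loses such elements, but the values removed are all zero on $F$, so $x(S)$ is still constant on $F$. Each new singleton part has constant value $0$. Refinement preserves the property that every tight set is a union of parts, so item~1 still holds. Item~2 now holds for both partitions with $\nu_{\{e\}}=\mu_{\{e\}}=0$.

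The only genuinely delicate point is Step~1's uncrossing. Everything else is bookkeeping, and the main care needed is in identifying each equivalence class as a difference of two nested tight sets. This is needed because a class need not be tight itself, so constancy of $x(S)$ on $F$ must come from subtracting two tight sets.
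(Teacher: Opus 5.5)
The paper gives no proof of this lemma; it imports it from Gurjar--Thierauf \cite[Lemma~3.5]{DBLP:journals/cc/GurjarT20}. Your argument is correct and self-contained. It differs from the standard uncrossing proof only in how the partition is read off from the lattice $\mathcal T_i$.

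The standard route fixes a maximal chain $\emptyset=T_0\subsetneq T_1\subsetneq\cdots\subsetneq T_m=E$ in $\mathcal T_i$ and takes the parts $T_j\setminus T_{j-1}$. Constancy, $x(T_j\setminus T_{j-1})=r_i(T_j)-r_i(T_{j-1})$, is then immediate, and the work goes into item~1. For tight $T$, the set $(T\cap T_j)\cup T_{j-1}$ is tight and lies between $T_{j-1}$ and $T_j$. Maximality then forces $T$ to contain all or none of $T_j\setminus T_{j-1}$.

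You shift the burden the other way. Item~1 holds by the definition of your equivalence classes, and the work goes into exhibiting each class as the difference $T_e\setminus U_e$ of two nested tight sets. The two constructions produce the same partition, since the consecutive differences of any maximal chain are exactly your classes. So the conclusions are identical. The chain version additionally orders the parts so that all prefixes are tight, which is the form one uses to describe $F$ by two chains of tight constraints. Yours avoids choosing a chain and makes the canonicity of the partition evident.

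Both arguments use only $F\subseteq P(\mathcal M_1,\mathcal M_2)$, not that $F$ is a face. One bookkeeping point: in Step~3, a part of $\mathcal P_i'$ consisting entirely of coordinates that vanish on $F$ becomes empty and should be dropped.
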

\begin{definition}\label{def:cycle}
   Let $\mathcal{M}_1$ and $\mathcal{M}_2$ be two matroids defined on a common ground set $E$ with a common base and  $F$ be a face of the polytope. Let $\mathcal{P}_1,\mathcal{P}_2$ be the partitions of $E$ from \cref{lem:FaceMI}.  A sequence
$C = (e_1,e_2,\dots,e_{2p})$ of distinct elements of $E$, where $p\geq 1$ and $e_{2p+1}:=e_1$, is called a \emph{cycle}
with respect to face $F$, if consecutive pairs are alternately in a set
from $\mathcal{P}_1$ and a set from $\mathcal{P}_2$. That is, for $i \in [p]$,
\begin{align*}
      e_{2i-1},e_{2i} &\in S_{2i-1} \text{ for some } S_{2i-1} \in \mathcal{P}_1 \\
      e_{2i},e_{2i+1} &\in S_{2i} \text{ for some } S_{2i}\in \mathcal{P}_2.
\end{align*}
For a face $F$, we denote by $\mathcal{C}_F$ the set of all cycles with respect to the face $F$. In particular,
$\mathcal C_{P(\mathcal M_1,\mathcal M_2)}$ is the set of
cycles associated with the entire polytope $P(\mathcal M_1,\mathcal M_2)$.

\end{definition}
We say a cycle is \emph{simple} if each part of the two partitions contains at most two elements from the cycle.
For a weight vector $w\in \Z^{E}$ and a cycle $C=(e_1,\dots,e_{2p})$, we define its \emph{circulation}, denoted by $c_w(C)$, as \[c_w(C)=|w(e_1)-w(e_2)+w(e_3)-\cdots-w(e_{2p})|\] 
Now, the following lemma gives the construction of an isolating set of weight vectors which follows from \cite[Section 3.4, Lemma~3.19]{DBLP:journals/cc/GurjarT20}.
\begin{lemma} \label{lem:isolationset}
    Let $\mathcal{M}_1,\mathcal{M}_2$ be two matroids with common ground set $E$ that have a common base. Let $\mathcal{U}\subseteq \Z^{E}$ such that for any face $F$ of $P(\mathcal{M}_1,\mathcal{M}_2)$ and any collection $\mathcal C$ of at most $|E|^4$ simple cycles of $\mathcal{C}_F$, there exist $u\in \mathcal{U}$ such that $c_u(C)\neq 0 \text{ for each cycle } C\in \mathcal{C}.$ 
    Let $\ell$ and $N$ be positive integers such that \[\ell\geq \lceil \log |E|\rceil-1\text{ and } N>|E|\cdot\max\limits_{u\in \mathcal U, e\in E} | u_e|.\] Let $\mathcal{W}\subseteq \Z^E$ be a set of weight functions defined as \begin{equation}\label{eq:defWkn}
    \mathcal{W} = \{ N^{\ell} u^{0} + N^{\ell-1}u^1 +\dots+ N^0u^{\ell} \mid u^i\in \mathcal{U} \text{ for each } i \in \{0,1,\dots,\ell\}\}.
    \end{equation} Then, $\mathcal{W}$ is isolating for $\mathcal{M}_1,\mathcal{M}_2$.
\end{lemma}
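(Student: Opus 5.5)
The plan is the Gurjar--Thierauf iterative face-refinement argument. Write $w=\sum_{i=0}^{\ell}N^{\ell-i}u^i$, where the vectors $u^i\in\mathcal U$ will be chosen one at a time. Let $F_0:=P(\mathcal M_1,\mathcal M_2)$. For $i\ge 0$, let $F_{i+1}$ be the face of $F_i$ on which $\langle u^{i},x\rangle$ is maximized. Since $N>|E|\max|u_e|$, every lower-order term $\sum_{j>i}N^{\ell-j}\langle u^j,x\rangle$ varies by less than one unit of $N^{\ell-i}$ between two vertices. Comparing two vertices (common bases) with respect to $w$ therefore amounts to comparing them lexicographically with respect to $(u^0,\dots,u^\ell)$. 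Hence the set of maximum-$w$ common bases is exactly the set of vertices of $F_{\ell+1}$. So it suffices to choose the $u^i$ so that $F_{\ell+1}$ is a single point.

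\textbf{Step 1 (cycles in an optimal face have zero circulation).} If $F'$ is the face of $F$ maximizing $u$, and $C\in\mathcal C_{F'}$, then $c_u(C)=0$. Given a point $x$ in the relative interior of $F'$, the vector $\pm\epsilon(\mathbb 1_{e_1}-\mathbb 1_{e_2}+\cdots-\mathbb 1_{e_{2p}})$ preserves every sum $x(S)$ for $S\in\mathcal P_1\cup\mathcal P_2$. By \cref{lem:FaceMI}, these sums are the only tight constraints on $F'$; coordinates that are zero form singleton parts and so cannot lie on a cycle. Thus both perturbed points stay in $F'$, and optimality of $u$ forces $\langle u,\cdot\rangle$ to be constant along this direction, which is exactly $c_u(C)=0$. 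A converse-type fact is also needed: if a face $F$ contains two distinct points, then the support of their difference yields a cycle in $\mathcal C_F$. This holds because the difference has zero sum on every part of $\mathcal P_1$ and of $\mathcal P_2$, so one can alternately walk between a positive and a negative coordinate inside a common part until a vertex repeats. Hence a face with no cycles is a single vertex.

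\textbf{Step 2 (doubling).} The inductive invariant is that $F_{i+1}$ has no cycle of length at most $2^{i+2}$. For $i=0$, the number of simple cycles of length at most $4$ in $\mathcal C_{F_0}$ is at most $|E|^4$, so the hypothesis provides $u^0$ giving all of them nonzero circulation. Step~1 then shows none of them survives in $F_1$. Non-simple short cycles contain shorter cycles, so it suffices to handle simple ones; we take shortest cycles, which are automatically simple. For the inductive step, assume $F_i$ has no cycle of length at most $2^{i+1}$. The key counting claim is that $F_i$ then has at most $|E|^4$ shortest/simple cycles of length at most $2^{i+2}$. The argument is the standard one: such a cycle is determined by four of its elements at roughly quarter positions, because with no cycles of length at most $2^{i+1}$, the alternating paths between them of length at most $2^{i}$ are unique. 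Applying the hypothesis on $\mathcal U$ to this collection gives $u^{i}$, and Step~1 kills all of these cycles in $F_{i+1}$. Since $F_{i+1}\subseteq F_i$ implies $\mathcal P_j(F_{i+1})$ refines $\mathcal P_j(F_i)$ by \cref{lem:FaceMI}, no new short cycles appear.

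\textbf{Step 3 (conclusion).} After $\ell+1\ge\lceil\log|E|\rceil$ rounds, $F_{\ell+1}$ has no cycle of length at most $2^{\ell+2}\ge 2|E|$. Every cycle consists of distinct elements and so has length at most $|E|$, hence there are no cycles at all. By Step~1, $F_{\ell+1}$ is a single common base, and this base is the unique $w$-maximizer. The main obstacle is the counting claim in Step~2, namely the uniqueness of short alternating paths when no short cycles exist. This must be carried out with respect to the partition structure of \cref{def:cycle} rather than a graph, and I would follow the argument of \cite{DBLP:journals/cc/GurjarT20} closely there.
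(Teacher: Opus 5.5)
Your strategy is the Gurjar--Thierauf face-refinement argument that the paper cites for this lemma. Step~1, the lexicographic reduction, the converse fact, and the inductive step for $i\ge 1$ are sound. The gap is the base case, and it is a genuine one.

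The hypothesis on $\mathcal U$ controls only \emph{simple} cycles. In $F_0=P(\mathcal M_1,\mathcal M_2)$ a $4$-cycle need not be simple, because $F_0$ may contain $2$-cycles. Such a non-simple $4$-cycle can have zero $u^0$-circulation and reappear in $F_1$ as a simple cycle once the partitions refine. Your remark that shortest cycles are automatically simple gives simplicity only with respect to the partitions of $F_1$. What you need is simplicity with respect to the coarser partitions of $F_0$, and that requires $F_0$ to have no cycles of half the length, which is not available at $i=0$.

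Here is a concrete failure. Take $E=\{e_1,e_2,e_3,e_4,f\}$, let $\mathcal M_1$ be the rank-$2$ matroid with parallel classes $\{e_1,e_2\},\{e_3,e_4\},\{f\}$, and let $\mathcal M_2$ be the partition matroid with rank-one parts $\{e_2,e_3\}$ and $\{e_1,e_4,f\}$.
\begin{itemize}
\item The common bases are $\{e_2,e_4\},\{e_1,e_3\},\{e_2,f\},\{e_3,f\}$.
\item The only $\mathcal M_1$-tight sets on $F_0$ are $\emptyset$ and $E$, so $\mathcal P_1=\{E\}$ and every simple cycle of $F_0$ has length $2$.
\item The weight $u^0=(6,7,6,5,0)$ (in the order $e_1,e_2,e_3,e_4,f$) has nonzero circulation on all of these $2$-cycles.
\item Its maximizing face $F_1$ is the edge joining $\{e_2,e_4\}$ and $\{e_1,e_3\}$. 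On $F_1$ the sets $\{e_1,e_2\}$ and $\{e_3,e_4\}$ become $\mathcal M_1$-tight, so $(e_1,e_2,e_3,e_4)$ is a cycle in $\mathcal C_{F_1}$.
\end{itemize}
So your invariant fails at $i=0$, and the simplicity and counting claims you rely on at $i=1$ are no longer justified.

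The repair is to start the doubling at length $2$.
\begin{itemize}
\item In round $0$ take all cycles of length $2$. There are at most $|E|^2$ of them, and they are always simple.
\item Maintain the invariant that $F_i$ has no cycle of length at most $2^i$.
\item In round $i$, any cycle of $F_{i+1}$ of length at most $2^{i+1}$ is a cycle of $F_i$. If it were not simple there, splitting it at a repeated part would give two closed alternating walks. The shorter one has length at most $2^i$ and contains a cycle of $F_i$ of length at most $2^i$, contradicting the invariant.
\item Hence that cycle lies in the chosen collection, and Step~1 rules it out.
\end{itemize}
After rounds $0,\dots,\ell$ no cycle of length at most $2^{\ell+1}\ge|E|$ remains, which is exactly what $\ell\ge\lceil\log|E|\rceil-1$ provides.

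Separately, you should derive the refinement of $\mathcal P_j(F_i)$ by $\mathcal P_j(F_{i+1})$ from how the partitions are built. Sets tight on $F_i$ stay tight on $F_{i+1}$, so each old part is a union of new parts. The bare existence statement of \cref{lem:FaceMI} does not give this by itself.
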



\section{Proof of \texorpdfstring{\cref{thm:RSreadonce}}{Theorem~5}} \label{sec:RSreadonce}
We first prove the theorem for $J=[n]$.  Fix matrices $U_1,U_2\in\mathbb{F}_q^{n\times kn}$ satisfying the
hypotheses of the theorem, and suppose that
$\det(U_1XU_2^{\mathsf T})$ is not the zero polynomial. We start by studying the monomials of $\det({U_1}XU_2^{\mathsf T})$. By the Cauchy--Binet formula for the determinant of the product of two rectangular matrices, we get
\[
\det({U_1}XU_2^{\mathsf T})=\sum_{S\subseteq [k]\times [n], |S|=n} \det({U_1}[S])\det(U_2[S])\prod_{(i,j)\in S}x_{i,j}.
\]
This implies that $\det({U_1}XU_2^{\mathsf T})$ is a multilinear polynomial of degree $n$. Let  \begin{equation} \label{eq:monsupdef}
    \mathcal{B}=\{S\subseteq [k]\times [n]\mid |S|=n \text{ and } \det({U_1}[S])\det(U_2[S])\neq 0 \}.
\end{equation}
In other words, $\mathcal{B}$ is the family of subsets of $[k]\times [n]$ consisting of support of monomials of $\det({U_1}XU_2^{\mathsf T})$. 
Let $\mathcal{M}_{U_2}$ denote the matroid represented by $U_2$ on the ground set $[k]\times [n]$. Therefore, any set $S\subseteq [k]\times [n]$  is in $\mathcal{B}$ if and only if both $\det({U_1}[S])$ and $\det({U_2}[S])$ are non-zero. In other words, $\mathcal{B}$ is the set of common bases of $\mathcal{M}_{k,[n]}$ and $\mathcal{M}_{U_2}$. If $\mathcal{B}_{k,[n]}$ and $\mathcal{B}_{U_2}$ are the set of bases of $\mathcal{M}_{k,[n]}$ and $\mathcal{M}_{U_2}$, then 
\begin{equation}\label{eq:Bcommonbase}
    \mathcal{B}= \mathcal{B}_{k,[n]}\cap \mathcal{B}_{U_2}.
\end{equation}

    
    Let $z$ be another variable. If $w\in\mathbb Z_{\geq0}^{[k]\times[n]}$ is an isolating weight for $\mathcal{M}_{k,[n]},\mathcal{M}_{U_2}$, then there exists a unique set in $\mathcal{B}$ that maximizes $w$.  Hence, if we substitute $x_{i,j}$ with $z^{w_{i,j}}$ in $\det({U_1}X{U_2}^{\mathsf T})$, the monomial corresponding to the unique maximum weight set does not get canceled out by any other monomial. Hence, $\det({U_1}X{U_2}^{\mathsf T})$  evaluates to a non-zero univariate polynomial in $z$ of degree equal to the maximum weight of any set in $\mathcal{B}$ with respect to $w$.
    
    For $\widetilde{w}\in \Z^n$, let $w\in \Z^{[k]\times [n]}$ such that \[w_{i,j}=(i-1)\widetilde{w}_j ~~\forall (i,j)\in [k]\times [n].\]
    %
    To obtain an isolating family of the required form, we first
construct weights with nonzero circulation on prescribed
collections of simple cycles. The \cref{lem:contructU} provides
such a construction.
    Recall \cref{def:cycle} for the set of cycles $\mathcal{C}_F$ corresponding to a face $F$ of the matroid intersection polytope. First, we give the following claim
 that states a property of simple cycles corresponding to the matroid intersection polytope when one of the two matroids is the partition matroid $\mathcal{M}_{k,[n]}$.
\begin{claim} \label{obs:simplecyclepartitionmatroid}
    Let $k\leq n$ be two positive integers and $\mathcal{M}$ be any matroid on the ground set $[k]\times [n]$ of rank $n$ having a common base with $\mathcal M_{k,[n]}$. Let $\mathcal{Q}=(Q_1,\cdots ,Q_n)$ be the partition of $[k]\times [n]$ with $Q_j=[k]\times \{j\}$ for each $j\in [n]$.  Let $F$ be a face of $P(\mathcal{M}_{k,[n]},\mathcal{M})$ and  \[C=(e_1,e_2,\dots,e_{2l})\]
    be any simple cycle of $\mathcal{C}_{F}$. Then, there exists an injective map $\phi_C: [l]\to [n]$ such that 
\[
e_{2i-1},e_{2i}\in Q_{\phi_C(i)}
\qquad\text{for every }i\in[l].
\]
\end{claim}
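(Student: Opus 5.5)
The plan is to pin down the partition $\mathcal P_1$ that \cref{lem:FaceMI} attaches to the face $F$ for the first matroid $\mathcal M_{k,[n]}$, and show it refines $\mathcal Q$. Once every part of $\mathcal P_1$ lies inside a single $Q_j$, the cycle condition $e_{2i-1},e_{2i}\in S_{2i-1}\in\mathcal P_1$ forces $e_{2i-1},e_{2i}$ into a common $Q_j$. We then set $\phi_C(i):=j$. It remains to show $\phi_C$ is injective, and for that we will use simplicity.

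\textbf{Step 1: $\mathcal P_1$ refines $\mathcal Q$.} Let $r_1$ be the rank function of $\mathcal M_{k,[n]}$. Every $x\in P(\mathcal M_{k,[n]},\mathcal M)$ satisfies $x(Q_j)\le r_1(Q_j)=1$ for each $j$ and $x(E)=n$. Summing over the $n$ parts forces $x(Q_j)=1=r_1(Q_j)$ on the whole polytope, hence on $F$. By item~1 of \cref{lem:FaceMI}, each $Q_j$ is then a union of parts of $\mathcal P_1$. Since the $Q_j$ partition $E$, every part of $\mathcal P_1$ is contained in exactly one $Q_j$. This gives the claimed membership $e_{2i-1},e_{2i}\in Q_{\phi_C(i)}$.

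\textbf{Step 2: injectivity.} This is where the argument needs care. Suppose $\phi_C(i)=\phi_C(i')=j$ with $i\neq i'$. Then $Q_j$ contains the four distinct elements $e_{2i-1},e_{2i},e_{2i'-1},e_{2i'}$. Simplicity only limits each part of $\mathcal P_1$ and $\mathcal P_2$ to two cycle elements, and $Q_j$ itself need not be a single part of $\mathcal P_1$. So a direct contradiction is not available, and the cleanest route depends on how $\mathcal P_1$ is actually constructed.

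\emph{First attempt.} Show that for the partition matroid the partition $\mathcal P_1$ from \cite{DBLP:journals/cc/GurjarT20} can be taken to be exactly $\mathcal Q$, with zero-elements split off as singletons. In their construction, $\mathcal P_1$ comes from a maximal chain of $r_1$-tight sets. For $\mathcal M_{k,[n]}$, $r_1(T)$ is the number of $j$ with $T\cap Q_j\neq\emptyset$. A tight set $T$ with $x(T)=r_1(T)$ must then contain, for each $Q_j$ it meets, all of the support of $x$ in $Q_j$. So tight sets are unions of full blocks (restricted to the support) together with some zero elements. The chain partition is then $\{Q_j\setminus Z\}$ together with singletons of zero-coordinates $Z$. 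Zero elements lie only in singleton parts of both partitions, by item~2 of \cref{lem:FaceMI}, so they cannot appear in any cycle. Hence each nonsingleton part of $\mathcal P_1$ is a whole $Q_j\setminus Z$, and simplicity limits $Q_j$ to two cycle elements. That gives injectivity.

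\emph{Fallback.} If I cannot fix the partition this way, I would instead argue that a cycle using two pairs from the same $Q_j$ can be shortcut into a shorter cycle. I would then restrict attention to such minimal cycles, which should be harmless for the use of \cref{lem:isolationset}. I expect checking this refinement argument against the exact form of \cref{lem:FaceMI} to be the main obstacle.
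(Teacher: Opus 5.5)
Your Step 1 matches the paper: $x(Q_j)=1=r_1(Q_j)$ on all of $F$, so by item~1 of \cref{lem:FaceMI} each $Q_j$ is a union of parts of $\mathcal P_1$. Your target for Step 2 is also the paper's. Writing $Q_j'$ for the elements of $Q_j$ that vanish on all of $F$, you want $\mathcal P_1$ to consist of the blocks $Q_j\setminus Q_j'$ together with the singletons $\{e\}$, $e\in Q_j'$.

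The gap is in how you get there. You say a direct argument from \cref{lem:FaceMI} is not available. Instead you rely on an unverified description of how Gurjar and Thierauf build $\mathcal P_1$ internally, namely as the differences of a maximal chain of tight sets. That description is not part of the lemma as stated, so the injectivity step is left resting on an external fact you have not checked. Your fallback does not prove the claim either. The claim, and its use in \cref{lem:contructU} to meet the hypothesis of \cref{lem:isolationset}, concerns every simple cycle of $\mathcal C_F$, not only minimal ones.

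The missing idea is the integrality built into \cref{lem:FaceMI}: every $S\in\mathcal P_1$ satisfies $x(S)=\nu_S\in\mathbb N$ for all $x\in F$.
\begin{itemize}
\item If a part $S\subseteq Q_j$ contains some $e\notin Q_j'$, pick $x\in F$ with $x_e>0$. Then $\nu_S=x(S)\ge x_e>0$, so $\nu_S\ge 1$.
\item The parts inside $Q_j$ have nonnegative $\nu$'s summing to $x(Q_j)=1$. So at most one of them has $\nu_S\ge1$, and the nonempty set $Q_j\setminus Q_j'$ lies in a single part.
\item By item~2, each $e\in Q_j'$ is its own singleton part. Hence that single part is exactly $Q_j\setminus Q_j'$.
\end{itemize}

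Injectivity then follows. A singleton part cannot contain the distinct pair $e_{2i-1},e_{2i}$, so no element of $Q_j'$ occurs in a cycle. If $\phi_C(i)=\phi_C(i')=j$ with $i\neq i'$, the four distinct elements $e_{2i-1},e_{2i},e_{2i'-1},e_{2i'}$ would all lie in the one part $Q_j\setminus Q_j'$, contradicting simplicity.

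This uses only the statement of \cref{lem:FaceMI}, so it holds for whatever partitions that lemma supplies.
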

\begin{proof}
For each $j\in [n]$, let $Q_j'$ be set of elements $e$ from $Q_j$ such that $x_e=0$ for each $x\in F$. Let $\mathcal{P}_1$ be the partition of $E=[k]\times [n]$ from \cref{lem:FaceMI} for $\mathcal{M}_1=\mathcal{M}_{k,[n]}$, $\mathcal{M}_2=\mathcal{M}$ and face $F$. Let $S\in \mathcal{P}_1$ with $\nu_{S}>0$. We first show that $S= Q_j\setminus Q_j'$ for some $j\in [n]$. From \cref{lem:FaceMI}, for each element $e$ in $\cup_{j\in[n] }Q_j' $, we get that $\{e\}\in \mathcal{P}_1$ with $\nu_{\{e\}}=0$. Hence, $S$ does not contain any element from  $\cup_{j\in[n] }Q_j' $.

Let $j$ be some element in $[n]$ such that  $S\cap Q_j\neq \emptyset$. Note that \[x(Q_j)=1=r_1(Q_j) \text{ for each } x\in F\subseteq P(\mathcal{M},\mathcal{M}_{k,[n]}).\] From \cref{lem:FaceMI}, $Q_j$ is a union of some sets from $\mathcal{P}_1$ along with $S$. This implies $S\subseteq Q_j\setminus Q_j'$. Since $\nu_{S}$ is at least 1 and $x(Q_j\setminus Q_j')=1$ for each $x\in F$, $S$ is exactly equal to $Q_j\setminus Q_j'.$ Hence, 
\begin{equation}\label{eq:PartForPartitionMatroid}\mathcal{P}_1=\{\{e\}\mid e\in Q'_j, j\in [n]\}\cup \{Q_j\setminus Q'_j\mid j\in [n]\}.\end{equation}

Let $C$ be any simple cycle of $\mathcal{C}_{F}$. From \eqref{eq:PartForPartitionMatroid} and the definition of a simple cycle, $C$ has either two or zero elements from $Q_j\setminus Q_j'$ for each $j\in [n]$. Hence,  we can partition $C$ into pairs, with each pair containing two consecutive elements of $C$, such that both consecutive elements of any pair belong to the same part of $\mathcal{Q}$. Moreover, each of these parts of $\mathcal{Q}$ is unique.
\end{proof}
We now show \cref{lem:contructU}.

    \begin{lemma}\label{lem:contructU}
            Let $n,k, s$ be three positive integers with $k\leq n$. Then, there exists $m=\tilde{O}(nks)$ such that we can construct $\mathcal{U}\subseteq \{0,1,\dots,m\}^{[k]\times [n]}$ of size $\tilde{O}(ns)$ in $\poly(n,k,s)$ with the following properties.
            \begin{enumerate}
                \item For each $u\in \mathcal{U}$, there exists $\widetilde{u}\in \Z_{\geq0}^n$ such that \[u_{i,j}=(i-1)\widetilde{u}_{j} ~~\forall (i,j)\in [k]\times [n].\]
                \item Let $\mathcal{M}$ be any matroid on $[k]\times [n]$ having a common base with $\mathcal M_{k,[n]}$. Let $F$ be any face of  $P(\mathcal{M}_{k,[n]},\mathcal{M})$ and $\mathcal{C}$ be any collection of at most $s$ simple cycles of $\mathcal{C}_{F}$. Then, there exists $u\in \mathcal{U}$ such that \[c_u(C)\neq 0 ~~\forall C\in \mathcal{C}.\]
            \end{enumerate}
    \end{lemma}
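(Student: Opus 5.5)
By \cref{obs:simplecyclepartitionmatroid}, every simple cycle $C=(e_1,\dots,e_{2l})$ of $\mathcal C_F$ pairs up into consecutive elements $e_{2i-1},e_{2i}\in Q_{\phi_C(i)}$, with $\phi_C$ injective. Write $e_{2i-1}=(a_i,\phi_C(i))$ and $e_{2i}=(b_i,\phi_C(i))$. For a weight of the form $u_{i,j}=(i-1)\widetilde u_j$, the circulation becomes
\[
c_u(C)=\Bigl|\sum_{i=1}^{l}(a_i-b_i)\,\widetilde u_{\phi_C(i)}\Bigr| .
\]
Because $\phi_C$ is injective, this is a linear form $L_C(\widetilde u)=\sum_{j\in[n]}\gamma_{C,j}\widetilde u_j$. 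Its coefficients are $\gamma_{C,j}=a_i-b_i$ if $j=\phi_C(i)$, and $0$ otherwise, so every coefficient lies in $\{-(k-1),\dots,k-1\}$. The elements of $C$ are distinct, so $a_i\neq b_i$ and $L_C$ is a nonzero linear form. The lemma therefore reduces to a standard problem: find a small family of integer vectors $\widetilde u\in\Z_{\ge0}^n$ such that any $s$ nonzero linear forms with integer coefficients of absolute value below $k$ are simultaneously nonvanishing at some member of the family.

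I would use the usual Kronecker substitution followed by reduction modulo small primes. Set $\widetilde u_j=(2k)^{j-1}$. Then $L_C(\widetilde u)$ is an integer whose base-$2k$ digits encode the coefficients of $L_C$, so it is nonzero and its absolute value is at most $(2k)^n$. To bring the numbers down in size, take instead $\widetilde u^{(p)}_j=(2k)^{j-1}\bmod p$ for primes $p$. We have $L_C(\widetilde u^{(p)})\equiv L_C(\widetilde u)\pmod p$, so it is nonzero whenever $p\nmid L_C(\widetilde u)$. The product $\prod_{C\in\mathcal C}L_C(\widetilde u)$ is a nonzero integer of absolute value at most $(2k)^{ns}$. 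It has at most $ns\log(2k)$ distinct prime divisors. Hence among the first $ns\log(2k)+1=\tilde O(ns)$ primes, at least one divides none of the $L_C(\widetilde u)$.

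Let $\mathcal U$ consist of the vectors $u^{(p)}_{i,j}=(i-1)\widetilde u^{(p)}_j$, one for each of these primes $p$. Property 1 holds by construction, and property 2 follows from the argument above. By the prime number theorem, every such prime is $\tilde O(ns)$. Each entry of $u^{(p)}$ is therefore at most $(k-1)p=\tilde O(nks)=m$. The family has $|\mathcal U|=\tilde O(ns)$ members, and the primes are found by sieving in $\poly(n,k,s)$ time. Computing the vectors is also polynomial time, by modular exponentiation.

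The main obstacle is the reduction to nonzero linear forms. This requires a careful use of \cref{obs:simplecyclepartitionmatroid} to check that the alternating signs in the circulation really group as $w(e_{2i-1})-w(e_{2i})$ inside one part $Q_{\phi_C(i)}$. That grouping matches the pairing given by the claim, since consecutive elements $e_{2i-1},e_{2i}$ share a $\mathcal P_1$-part. The claim's injectivity then guarantees that distinct $i$ contribute to distinct coordinates, so no cancellation can occur. Once this reduction is in place, the rest is routine prime counting.
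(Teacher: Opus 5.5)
Your proof is correct and follows essentially the same route as the paper. Both use the claim to reduce to nonzero integer linear forms with coefficients in $\{-(k-1),\dots,k-1\}$, take Kronecker weights $\widetilde u_j=b^{j-1}$, and reduce modulo a small number that does not divide the product of all circulations. The only differences are cosmetic: the paper uses base $k$ (already enough for such balanced coefficients) and the bound $\operatorname{lcm}(1,\dots,m')>k^{ns}$ over all moduli $p\in[m']$, whereas you use base $2k$ and count the distinct prime divisors of the product.
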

   \begin{proof}
        If $k=1$, using \cref{lem:FaceMI}, we can argue that $\mathcal{C}_F=\emptyset$ for any face $F$ of $P(\mathcal{M}_{k,[n]},\mathcal{M})$. Choose $\mathcal{U}=\{0\}$ in this case, and the lemma trivially holds. We therefore assume that $k\geq2$ for the remainder of
the proof. Let $\widetilde{u}\in \Z^n=(1,k,\dots,k^{n-1})$. Let $Q_j=[k]\times \{j\}$ for each $j\in [n]$. Let $C=(e_1,e_2,\dots, e_{2l})$ be any simple cycle of $\mathcal{C}_{F}$ with $e_i\in [k]\times [n]$ for each $i$ in $[2l]$. From \cref{obs:simplecyclepartitionmatroid}, without loss of generality, there exists an injective map $\phi_C: [l]\to [n]$ such that $e_{2i-1}\text{ and }e_{2i}$ both belong to $Q_{\phi_C(i)}$ for each $i\in [l]$. Let $u\in \Z_{\geq0}^{k\times n}$ such that \[u_{i,j}=(i-1)\widetilde{u}_j.\]
        For each $i\in [l]$, let $h_{i_1},h_{i_2}\in [k]$ such that    \[e_{2i-1}=(h_{i_1}, \phi_C(i)) \text{ and } e_{2i}=(h_{i_2}, \phi_C(i)).\]
        Then,
        \begin{align*}
            c_u(C)&=\bigl\vert\sum_{i\in [l]}(u_{e_{2i-1}}-u_{e_{2i}})\bigr\vert\\
            &= \bigl\vert\sum_{i\in [l]} (h_{i_1}-h_{i_2}) \widetilde{u}_{\phi_C(i)}\bigr\vert=\bigl\vert\sum_{i\in [l]} (h_{i_1}-h_{i_2}) k^{(\phi_C(i)-1)}\bigr\vert. 
        \end{align*}
        Since $\phi_C$ is injective and
$0<|h_{i_1}-h_{i_2}|<k$ for every $i\in[l]$, the term involving the
largest power of $k$ cannot be canceled by the remaining terms.
Therefore, $c_u(C)\neq 0$. This also implies that $c_u(C)<k^n$. Let \[ W=\prod_{C\in \mathcal{C}}c_u(C).\] From above discussion, we get that \[0<W<k^{ns}. \]  Since $k\leq n$,  a standard estimate from number theory, following from the Prime Number Theorem for the Chebyshev function, implies that there exists a positive integer $m'=\tilde{O}(ns)$ such that the least common multiple of $\{1,2,\dots, m'\}$ is greater than $k^{ns}$.  If every $p\in[m']$ divided $W$, then lcm$(1,\ldots,m')$ would divide $W$, contradicting
\[
0<W<k^{ns}<\operatorname{lcm}(1,\ldots,m').
\] Hence, there exists some $p\in [m']$ such that $W \mod p\neq 0$.
        
        Let $\mathcal{U}$ be the following set of $m'$ weight functions
        \[\mathcal{U}=\{u^p\in \Z_{\geq0}^{[k]\times [n]}\mid p\in [m'], u^p_{i,j}=(i-1)(k^{j-1}\mod p) \forall i\in [k],j\in [n]   \}.\] 
        The first property of \cref{lem:contructU} directly follows from the above construction of $\mathcal{U}$. Next, we show that $\mathcal{U}$ also satisfies the second property.
        Let $p\in [m']$ such that $W\mod p$ is non-zero.
Since $u^p(e)\equiv u(e)\pmod p$ for every
$e\in[k]\times[n]$, for every cycle $C\in\mathcal C$ we have
\[
c_u(C)\equiv \pm c_{u^p}(C)\pmod p.
\]
In particular,
\[
c_u(C)\not\equiv0\pmod p
\quad\Longleftrightarrow\quad
c_{u^p}(C)\not\equiv0\pmod p.
\]
Since $W\not\equiv0\pmod p$, we have
$c_u(C)\not\equiv0\pmod p$ for every $C\in\mathcal C$.
Consequently, $c_{u^p}(C)\neq0$ for every $C\in\mathcal C$.
        Note that \[
        |\mathcal{U}|\leq m' \text{ and } u_{i,j}\leq km' ~~\forall u\in \mathcal{U}, (i,j)\in [k]\times [n].
        \]
        This completes the proof of \cref{lem:contructU}.
    \end{proof}

    By combining \cref{lem:contructU,lem:isolationset} together, we get an isolating set of weight functions with the required properties in the following corollary.
        \begin{corollary} \label{lem:DesiredIsolationSet}
        Let $k\leq n$ be two positive integers. Then, we can construct in $\qp(nk)$ time a set of weight vectors $\widetilde{\mathcal{W}}\subseteq \Z_{\geq0}^n$ of size $\qp(nk)$ with each coordinate of any  weight vector in $\widetilde{\mathcal{W}}$ bounded by $\qp(nk)$   such that $\mathcal{W}\subseteq \Z^{[k]\times [n]}$ defined as 
        \[\mathcal{W}=\{w\mid w_{i,j}=(i-1)\widetilde{w}_j~ \forall i,j\in [k]\times [n] \text{ and } \widetilde{w}\in \widetilde{\mathcal{W}}\},\]
        is an isolating set for $\mathcal{M},M_{k,[n]}$ for any matroid $\mathcal{M}$ defined on ground set $[k]\times [n]$ having a common base with $\mathcal M_{k,[n]}$.
    \end{corollary}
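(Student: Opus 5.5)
The plan is to instantiate \cref{lem:isolationset} with the family $\mathcal{U}$ from \cref{lem:contructU}, and then check that the combined family inherits the Vandermonde-compatible form. Set $E=[k]\times[n]$, so $|E|=nk$. Apply \cref{lem:contructU} with $s=|E|^4=(nk)^4$. This yields, in $\poly(n,k)$ time, a family $\mathcal{U}$ of size $\tilde O(n\cdot(nk)^4)$ with entries in $\{0,\dots,m\}$, where $m=\tilde O(nk\cdot(nk)^4)=\poly(nk)$. Each $u\in\mathcal U$ has the form $u_{i,j}=(i-1)\widetilde u_j$. Moreover, for every face $F$ of $P(\mathcal M_{k,[n]},\mathcal M)$ and every collection of at most $(nk)^4$ simple cycles in $\mathcal C_F$, some $u\in\mathcal U$ has nonzero circulation on all of them. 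This is exactly the hypothesis of \cref{lem:isolationset} for $\mathcal M_1=\mathcal M_{k,[n]}$ and $\mathcal M_2=\mathcal M$, uniformly over all matroids $\mathcal M$ having a common base with $\mathcal M_{k,[n]}$, since $\mathcal U$ does not depend on $\mathcal M$.

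Next, fix $\ell=\lceil\log(nk)\rceil$ and $N=nk\cdot m+1$, which is strictly larger than $|E|\max|u_e|$. Define
\[
\widetilde{\mathcal W}=\Bigl\{\textstyle\sum_{t=0}^{\ell}N^{\ell-t}\widetilde u^{t}\ \Big|\ u^t\in\mathcal U\Bigr\},
\]
where $\widetilde u^t$ is the vector associated with $u^t$. The key observation is linearity: if $w=\sum_t N^{\ell-t}u^t$, then
\[
w_{i,j}=\sum_t N^{\ell-t}(i-1)\widetilde u^t_j=(i-1)\widetilde w_j .
\]
Hence the family $\mathcal W$ built from $\widetilde{\mathcal W}$ as in the corollary coincides with the family in \eqref{eq:defWkn}. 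By \cref{lem:isolationset}, $\mathcal W$ is therefore isolating for $\mathcal M,\mathcal M_{k,[n]}$ for every such $\mathcal M$.

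The bounds then follow by direct counting. We have $|\widetilde{\mathcal W}|\le|\mathcal U|^{\ell+1}=\poly(nk)^{O(\log nk)}=\qp(nk)$. Each coordinate is at most $(\ell+1)N^{\ell}m=\poly(nk)^{O(\log nk)}=\qp(nk)$, and all coordinates are nonnegative because the entries of $\mathcal U$ are. Enumerating all $(\ell+1)$-tuples from $\mathcal U$ and computing each combination with quasi-polynomial-size integers takes $\qp(nk)$ time.

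The only step needing care is the parameter matching: $s$ must be chosen as $|E|^4$ so that \cref{lem:contructU} meets the cycle-collection hypothesis of \cref{lem:isolationset}. Beyond that, the substantive point is the linearity observation, which shows the $N$-adic combination preserves the product form $w_{i,j}=(i-1)\widetilde w_j$. The case $k=1$ is trivial, since then $\mathcal U=\{0\}$ and every weight is zero and of the required form; in that case \cref{lem:isolationset} still applies, because there are no cycles and the common base is unique.
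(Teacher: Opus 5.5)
Your proposal is correct and follows the paper's proof essentially step for step: you apply \cref{lem:contructU} with $s=|E|^4$, form the $N$-adic combinations, use linearity to show they keep the form $w_{i,j}=(i-1)\widetilde w_j$ (so $\mathcal W$ coincides with the family in \cref{lem:isolationset}), and dispose of $k=1$ separately. Your $\ell=\lceil\log(nk)\rceil$ is one larger than the paper's $\lceil\log_2|E|\rceil-1$, which is harmless because \cref{lem:isolationset} only requires a lower bound on $\ell$.
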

    \begin{proof}
    If $k=1$, take $\widetilde{\mathcal W}=\{0\}$.
The only possible common base is the entire ground set.
We henceforth assume $k\geq2$.

    Let $E=[k]\times [n]$ and $s=(kn)^4$. Let $\mathcal{U}\subseteq \Z_{\geq0}^{[k]\times[n]}$ be the set of weight vectors from \cref{lem:contructU} for $n,k,s$. From the construction of $\mathcal{U},$ for any weight vector $u\in \mathcal{U}\subseteq \Z_{\geq0}^{[k]\times [n]}$, there exists  $\widetilde{u}\in \Z_{\geq0}^n$ such that 
  $u_{i,j}=(i-1)\widetilde{u}_{j}$ for all $(i,j)\in [k]\times [n]$.

    Let
\[
\ell=\lceil\log_2|E|\rceil-1
\text{ and }
N=1+|E|\max_{u\in\mathcal U,e\in E}|u_e|.
\]  Let $\widetilde{\mathcal{W}}\subseteq \Z_{\geq0}^{n}$ be the set of weight vectors defined such that 
    \[\widetilde{\mathcal{W}}= \{ N^{\ell} \widetilde{u^{0}} + N^{\ell-1}\widetilde{u^1} +\dots+ N^0\widetilde{u^{\ell}} \mid u^i\in \mathcal{U} \text{ for each } i \in \{0,1,\dots,\ell\}\} \]
    Let $\mathcal{W}\subseteq \Z^{[k]\times [n]}$ be the set
        \[\mathcal{W}=\{w\mid w_{i,j}=(i-1)\widetilde{w}_j~ \forall i,j\in [k]\times [n] \text{ and } \widetilde{w}\in \widetilde{\mathcal{W}}\}.\]
From the construction of $\widetilde{\mathcal{W}}$ and $\mathcal{W}$, $\mathcal{W}$ is also equal to the set \[ \{ N^{\ell} u^{0} + N^{\ell-1}u^1 +\dots+ N^0u^{\ell} \mid u^i\in \mathcal{U} \text{ for each } i \in \{0,1,\dots, \ell\}\}.\]
    
    From \cref{lem:contructU},  for any face $F$ of  $P(\mathcal{M}_{k,[n]},\mathcal{M})$ and any collection $\mathcal{C}$   of at most  $|E|^4$ simple cycles of $\mathcal{C}_{F}$, there exists $u\in \mathcal{U}$ such that $c_u(C)\neq 0 \text{ for each cycle } C\in \mathcal{C}.$ Hence, from \cref{lem:isolationset}, $\mathcal{W}$ is an isolating set for $\mathcal{M}_{k,[n]}$ and $\mathcal{M}$.
     \end{proof}
    Next, we use \cref{lem:DesiredIsolationSet} to construct a bivariate polynomial map that preserves the non-zeroness of $\det({U_1}X{U_2}^{\mathsf T})$.  From \eqref{eq:monsupdef} and \eqref{eq:Bcommonbase}, the set of support of monomials of $\det({U_1}X{U_2}^{\mathsf T})$, denoted by $\mathcal{B}$, is the set of common bases of $\mathcal{M}_{k,[n]}$ and $\mathcal{M}_{U_2}$. Let $\widetilde{\mathcal{W}}\subseteq \Z^{n}$ be the set of weight vectors  from \cref{lem:DesiredIsolationSet}. 
    Let \[D_y=|\widetilde{\mathcal{W}}|\text{ and } D_z=\max_{\widetilde{w}\in \widetilde{\mathcal{W}},j\in [n]}\widetilde{w}_j.\] 
    Set $C=\max\{D_y,n\}$,
and let $q\geq C$ be a prime power and $S =\{a_1,a_2\dots, a_{D_y}\}$ be  a subset of $\mathbb{F}_q$.
    Let $\widetilde{w^1},\dots,\widetilde{w^{D_y}}$ be the weight vectors of $\widetilde{\mathcal{W}}$.
    Let $\cG=(\cG_1,\dots,\cG_n)\in\mathbb F_q[y,z]^n$ satisfy
\[
  \cG_j(a_p,z)=z^{\widetilde{w^p}_j}
\]
for all $p\in[D_y]$ and $j\in[n]$.
    The polynomial map $\cG$ satisfying the above condition can be easily found using Lagrange interpolation as follows.
    \[
\cG_j(y,z)
   =\sum_{s=1}^{D_y}L_s(y)z^{\widetilde{w^{s}}_j},
\text{ where }
L_s(y)
  =\prod_{\substack{t\in[D_y]\\t\ne s}}
       \frac{y-a_t}{a_s-a_t}.
\]
Hence,\begin{equation}\label{eq:degG}\deg_y(\cG_j)\leq D_y\text{ and }\deg_z(\cG_j)\leq D_z ~\forall j\in [n].\end{equation}
    Let $f(y,z)$ be the bivariate polynomial obtained by substituting \begin{equation}\label{eq:bisub}
    x_{i,j} \text{ with } (\mathcal{G}_j(y,z))^{i-1} \text{ for each } (i,j)\in [k]\times [n] \text{ in the polynomial }  \det({U_1}X{U_2}^{\mathsf T}).
    \end{equation}
    From \cref{lem:DesiredIsolationSet}, there exists $p\in[D_y]$ such that the corresponding weight vector $w\in \mathbb{Z}^{[k]\times [n]}$ with \[w_{i,j}= (i-1)\widetilde{w^p}_j \text{ for each } (i,j)\in [k]\times [n]\] is isolating for $\mathcal{B}$. For any set $S\in \mathcal{B}$, the substitution of \eqref{eq:bisub} with $y=a_p$ maps the monomial
    \[\prod_{(i,j)\in S}x_{i,j}\to \prod_{(i,j)\in S}(z^{\widetilde{w^p}_j})^{(i-1)}=\prod_{(i,j)\in S}z^{w_{i,j}}=z^{w(S)}.\]
Since $w$ is isolating for $\mathcal{B}$, $f(a_p,z)$ is non-zero which in turn implies that $f(y,z)$ is non-zero.

From \eqref{eq:degG} and \eqref{eq:bisub}, we get $\deg_y(f)\leq nkD_y$ and $\deg_z(f)\leq nkD_z$. Let $t$ be another variable. Now we apply the Kronecker substitution 
$y\mapsto t$ and $z\mapsto t^{nkD_y+1}$.
It preserves nonzeroness of $f$ and yields a univariate polynomial of degree strictly less than
\begin{equation} \label{eq:Dbound}
D
=
(nkD_y+1)(nkD_z+1).
\end{equation}
This is again quasipolynomial in $nk$.
Therefore, if $r\geq D$, then an element
$\beta\in\mathbb{F}_{q^r}$ of degree $r$ over $\mathbb{F}_q$
simultaneously preserves the nonzeroness of $f$ for every pair of
$n\times kn$ matrices $U_1,U_2$. Hence, the elements
$\alpha_1,\ldots,\alpha_n$ defined by $\alpha_j=\cG_j(\beta,\beta^{nkD_y+1})$
satisfy the conclusion of \cref{thm:RSreadonce} for $J=[n]$.

We now show that the same point works for every $J\subseteq[n]$.
Fix $J\subseteq[n]$ and a pair of $|J|\times k|J|$ matrices
$U_1,U_2$ as in the theorem, and let
$\overline{J}:=[n]\setminus J$.

Let $P_1$ be the canonical representation of
$\mathcal{M}_{k,\overline{J}}$. Thus, for every
$j\in\overline{J}$, all the columns of $P_1$ indexed by
$[k]\times\{j\}$ are equal to the $j$th standard basis vector.
Let $P_2$ be the $|\overline{J}|\times k|\overline{J}|$ matrix
whose columns are indexed by $[k]\times\overline{J}$ such that,
for every $j\in\overline{J}$, the column indexed by $(1,j)$ is
the $j$th standard basis vector and all the remaining columns
are zero.

After ordering the columns by first listing $[k]\times J$ and
then $[k]\times\overline{J}$, define
$\widehat{U}_1:=U_1\oplus P_1$ and
$\widehat{U}_2:=U_2\oplus P_2$. The column matroid of
$\widehat{U}_1$ is $\mathcal{M}_{k,[n]}$, and
\[
\widehat{U}_1X\widehat{U}_2^{\mathsf T}
=
\begin{bmatrix}
U_1X_JU_2^{\mathsf T} & 0\\
0 & \operatorname{diag}(x_{1,j}:j\in\overline{J})
\end{bmatrix}.
\]
Consequently,
$\det(\widehat{U}_1X\widehat{U}_2^{\mathsf T})
=
\det(U_1X_JU_2^{\mathsf T})\prod_{j\in\overline{J}}x_{1,j}$.

Thus, if $\det(U_1X_JU_2^{\mathsf T})$ is not the zero polynomial, then
$\det(\widehat{U}_1X\widehat{U}_2^{\mathsf T})$ is not the zero polynomial.
By the case $J=[n]$ proved above, the latter determinant evaluates
to a nonzero element under the substitution
$x_{i,j}=\alpha_j^{i-1}$. Since $x_{1,j}$ evaluates to
$\alpha_j^0=1$ for every $j\in\overline{J}$, it follows that
$\det(U_1X_JU_2^{\mathsf T})$ also evaluates to a nonzero element under the
same substitution. Since $C=\max\{D_y,n\}$,  and from the definition of $D$ (Eq. \eqref{eq:Dbound}), we get
$C,D\leq(nk)^{O(\log(nk))}$.

It remains to show that $\alpha_1,\ldots,\alpha_n$ are pairwise
distinct. Suppose first that $k\geq 2$. Fix distinct $a,b\in[n]$ and
let $J=\{a,b\}$. Let $U_1$ be the canonical representation of
$\mathcal{M}_{k,J}$. Let $\mathbf{e}_{(i,j)}$ denote the standard
basis vector indexed by $(i,j)\in[k]\times J$, and define $U_2$ to
have rows
$\mathbf{e}_{(1,a)}+\mathbf{e}_{(1,b)}$
and
$\mathbf{e}_{(2,a)}+\mathbf{e}_{(2,b)}$.
Then,
\[
U_1X_JU_2^{\mathsf T}
=
\begin{bmatrix}
x_{1,a} & x_{2,a}\\
x_{1,b} & x_{2,b}
\end{bmatrix},
\]
and hence $\det(U_1X_JU_2^{\mathsf T})
=
x_{1,a}x_{2,b}-x_{2,a}x_{1,b}$,
which is not the zero polynomial. By the property proved above,
\[
0
\neq
\left.
\det(U_1X_JU_2^{\mathsf T})
\right|_{x_{i,j}=\alpha_j^{i-1}}
=
\alpha_b-\alpha_a.
\]
Therefore, $\alpha_a\neq\alpha_b$. Since $a$ and $b$ were arbitrary,
the elements $\alpha_1,\ldots,\alpha_n$ are pairwise distinct.

If $k=1$, the substitution is $x_{1,j}=\alpha_j^0=1$ and is therefore
independent of the choice of $\alpha_j$. Thus, we may choose
$\alpha_1,\ldots,\alpha_n$ to be any $n$ distinct elements of
$\mathbb{F}_{q^r}$. 

Finally, we note that the above construction is deterministic.  The weight family from
\cref{lem:DesiredIsolationSet} can be constructed in
$\qp(nk)$ time. Since $D_y$ and $D_z$ are quasipolynomially
bounded in $nk$, the polynomials $G_1,\ldots,G_n$ can be computed
using Lagrange interpolation with $\qp(nk)$ arithmetic operations
over $\mathbb F_q$.
Using a standard deterministic algorithm, we construct an
irreducible polynomial of degree $r$ over $\mathbb F_q$ in
$\poly(qr)$ time. This gives a representation of
$\mathbb F_{q^r}$ and an element $\beta$ of degree $r$ over
$\mathbb F_q$. We then compute
$\alpha_j=G_j(\beta,\beta^{nkD_y+1})$ for every $j\in[n]$ using
$\qp(nk)$ arithmetic operations over $\mathbb F_{q^r}$.
Each such operation takes $\poly(r\log q)$ time.
Thus, the entire construction takes
$\qp(nk)\cdot\poly(qr)$ time.


\section{Proof of \texorpdfstring{\cref{thm:beyondJohnson}}{Theorem~1}}\label{sec:RSLD}
In this section, we give a deterministic construction of Reed-Solomon codes that are list decodable beyond the Johnson bound. Precisely,
we present a proof of \cref{thm:beyondJohnson}. For this, we will use the result of \cite{DBLP:journals/siamcomp/GuoLSTW24}, which gives a sufficient condition on the set of evaluation points. 

To describe this condition, we present some definitions and notations. First, we define a family of matrices $\mathcal{B}_t$, with the property that the rows of $\mathcal{B}_t$ form a basis of the cycle space of the complete graph $K_t$.

\paragraph*{The cycle-space matrix}
For an integer $t\geq 2$, let $\mathcal{B}_t$ be the
$\binom{t-1}{2}\times\binom{t}{2}$ matrix whose rows and columns
are indexed by the two-element subsets of $[t-1]$ and $[t]$,
respectively.\footnote{For $t=2$, $\mathcal{B}_t$ is the $0\times 1$ empty matrix.}
For $1\leq u<v<t$ and $1\leq p<q\leq t$, define
\[
\mathcal{B}_t[\{u,v\},\{p,q\}]
=
\begin{cases}
  1  & \text{if } \{p,q\}=\{u,v\}
       \text{ or } \{p,q\}=\{v,t\},\\
  -1 & \text{if } \{p,q\}=\{u,t\},\\
  0  & \text{otherwise.}
\end{cases}
\]
For example,
\[
\mathcal{B}_4=
\begin{bmatrix}
  1 & 0 & 0 & -1 &  1 & 0\\
  0 & 1 & 0 & -1 &  0 & 1\\
  0 & 0 & 1 &  0 & -1 & 1
\end{bmatrix},
\]
where the rows are indexed by
$\{1,2\}$, $\{1,3\}$, $\{2,3\}$, and the columns are indexed by
$\{1,2\}$, $\{1,3\}$, $\{2,3\}$, $\{1,4\}$, $\{2,4\}$, $\{3,4\}$,
in this order.


\paragraph*{$t$-Wise Intersection Matrices}
We need the notion of intersection matrices from \cite{ST23,DBLP:journals/siamcomp/GuoLSTW24}.

\begin{definition}[{Intersection matrix \cite{ST23,DBLP:journals/siamcomp/GuoLSTW24}}]
    For a $k\times n$ matrix $V$ and $t\geq 3$ subsets $I_1,I_2,\dots,I_t\subseteq [n]$, the corresponding \emph{t-wise intersection} matrix, denoted by $M_{V,(I_1,\dots,I_t)}$, is the following $\left(\binom{t-1}{2}k+\sum_{1\leq i<j\leq t}|I_i\cap I_j|\right)\times \binom{t}{2}k$ matrix:
    \[M_{V,(I_1,\dots,I_t)}=\begin{bmatrix}
        \mathcal{B}_t\otimes \mathcal{I}_k\\
        \hline
        \diag\left(V[I_i\cap I_j]^{\mathsf T}: \{i,j\}\in \binom{[t]}{2} \right)
    \end{bmatrix}.\]
\end{definition}

For $t$ subsets $I_1,\dots, I_t$ of $[n]$, let  \[
\wt(I_1,\dots, I_t)=\left(\sum_{i\in [t]}|I_i|\right)- |\cup_{i\in [t]} I_i|.
\]
For a subset $J\subseteq [t]$, let $\wt(I_J)=\wt(\{I_i\mid i\in J\})$. 
Now, we record the following statement, which follows from \cite[Proof of Theorem 5.1]{DBLP:journals/siamcomp/GuoLSTW24}. It provides a sufficient condition for an RS code to be $(1-\e,O(1/\e))$ list decodable in terms of intersection matrices defined above.

For $t'=2$, we interpret
$M_{V,(I_1,I_2)}$ as the matrix $V[I_1\cap I_2]^{\mathsf T}$.

\begin{theorem}\label{lem:EvalProp}
     There exists an absolute constant $c_0>0$ such that the following
holds.  Let $k,n,t\in \mathbb{N}\text{ and }\e\in (0,1]$ such that
\[
1\leq k\leq
\frac{c_0\e n}{\log(1/\e)+1},
\]
and let $t=\lceil 2/\e\rceil$. Let $x_{i,j}$, $(i,j)\in[k]\times[n]$, be distinct variables and $V$ be the $k\times n$ matrix with $V[i,j]=x_{i,j}$. Let $\alpha_1,\dots ,\alpha_n\in \mathbb{F}$ be pairwise distinct such that for every $2\leq t'\leq t$ and $t'$ 
 subsets $I_1,\dots,I_{t'}$ of $[n]$ satisfying the following conditions
    \begin{gather*} \wt(I_{[t']})=(t'-1)k, ~\wt(I_J)\leq (|J|-1)k  \text{ for each non-empty } J\subseteq [t']\text{ and }\\ I_i\cap I_j\cap I_l=\emptyset  \text{ for distinct } i,j,l\in [t'],\end{gather*}  $M_{V,(I_1,\dots,I_{t'})}$ is full rank over $
\mathbb F(x_{i,j}:(i,j)\in[k]\times[n])
$ if and only if it is full rank over $\mathbb{F}$ on substituting \[x_{i,j}\text{ with }\alpha_j^{i-1} \text{ for each }(i,j)\in [k]\times [n].\] Then,
    the RS code with evaluation points at $\alpha_1,\dots,\alpha_n$ has rate $R=k/n$ and is $(1-\e,t-1)$ list decodable.
\end{theorem}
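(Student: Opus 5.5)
The plan is to argue by contradiction, following the outline of \cite[Proof of Theorem 5.1]{DBLP:journals/siamcomp/GuoLSTW24}. Suppose some $y\in\mathbb F^n$ has $t=\lceil 2/\e\rceil$ distinct codewords within relative distance $1-\e$. Write them as $f_1,\dots,f_t$ with $\deg f_i<k$, and let $A_i\subseteq[n]$ be the agreement set of $f_i$ with $y$, so $|A_i|\geq \e n$. For $i\neq j$ the difference $f_i-f_j$ vanishes at $\alpha_l$ for every $l\in A_i\cap A_j$.

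\emph{Kernel vector.} I would stack the coefficient vectors $g_{\{i,j\}}:=\mathrm{coef}(f_i-f_j)\in\mathbb F^k$ into a vector $g\in\mathbb F^{\binom t2 k}$. The rows of $\mathcal B_t\otimes\mathcal I_k$ encode the cycle relations $g_{\{u,v\}}-g_{\{u,t\}}+g_{\{v,t\}}=0$, which hold automatically. The block $\diag(V[I_i\cap I_j]^{\mathsf T})$, evaluated at $x_{i,j}=\alpha_j^{i-1}$, encodes the vanishing of $f_i-f_j$ on $I_i\cap I_j$. Hence $g$ lies in the kernel of $M_{V,(I_1,\dots,I_t)}$ after substitution, for any $I_i\subseteq A_i$. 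Moreover $g\neq0$, since the codewords are distinct. So the evaluated matrix is not of full column rank. The same holds for any subfamily of $t'$ of the codewords, with the matrix $M_{V,(I_{i_1},\dots,I_{i_{t'}})}$; for $t'=2$ it is $V[I_1\cap I_2]^{\mathsf T}$, consistent with the convention in the statement.

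\emph{Choosing the sets.} Next I would shrink the $A_i$ to sets $I_i$ and pass to a subfamily of $t'$ codewords so that three conditions hold: no element lies in three of the sets, $\wt(I_{[t']})=(t'-1)k$ exactly, and $\wt(I_J)\leq(|J|-1)k$ for every nonempty $J$. The starting weight is $\sum|A_i|-|\bigcup A_i|\geq t\e n-n\geq n$. The rate assumption $k\leq c_0\e n/(\log(1/\e)+1)$ is what should leave enough surplus to pay for deleting elements in three or more sets; I expect the $\log(1/\e)$ loss to enter exactly here. After that, I would delete elements one at a time to bring the weight down to $(t'-1)k$. Whenever some $J$ becomes overweight, i.e.\ $\wt(I_J)>(|J|-1)k$, I would restrict to a minimal such $J$, which yields a smaller family satisfying all the constraints. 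This combinatorial reduction is the step I expect to be the main obstacle, and the bookkeeping showing the weight never drops below the target before a tight family appears is where I would spend effort.

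\emph{Generic full rank and conclusion.} For a family meeting these conditions, the generic (symbolic) intersection matrix has full column rank over $\mathbb F(x_{i,j})$. This is the GM-MDS/intersection-matrix theorem of \cite{DBLP:journals/siamcomp/GuoLSTW24}, which I would invoke rather than reprove. The hypothesis on $\alpha_1,\dots,\alpha_n$ then transfers full column rank to the evaluated matrix. This contradicts the nonzero kernel vector, so no Hamming ball of radius $(1-\e)n$ contains $t$ codewords, i.e.\ the code is $(1-\e,t-1)$ list decodable. Finally, the rate is $k/n$ because the $\alpha_j$ are pairwise distinct.
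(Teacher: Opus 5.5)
The paper does not reprove this statement; it records it as a consequence of \cite[Proof of Theorem 5.1]{DBLP:journals/siamcomp/GuoLSTW24}, and your outline has the right overall architecture. Your kernel-vector step is correct. Your last step is also correct, once you note one point: with no triple intersections, $\wt(I_J)$ equals the number of edges induced on $J$ in the multigraph whose edges are the coordinates lying in exactly two of the sets. The weight conditions therefore say, by Nash-Williams--Tutte, that this multigraph is a union of $k$ edge-disjoint spanning trees, which is the tree-packing hypothesis of the nonsingularity theorem you invoke.

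The gap is the middle step. You leave it open, and the mechanism you suggest does not work. Write $s_j:=|\{i: j\in A_i\}|$. If you keep all $t$ codewords and retain each coordinate in only two of the sets containing it, the weight drops from $\sum_j(s_j-1)$ to $|\{j: s_j\geq 2\}|$. The rate assumption cannot pay for this. The available counting constraints are $\sum_j s_j\geq t\e n$ and $|A_i\cap A_{i'}|\leq k-1$ for distinct codewords. When $k=\Theta(\e n/\log(1/\e))$, these permit every multiplicity to be about $s=\Theta(t/\log(1/\e))$. Then $|\{j:s_j\geq 2\}|\approx t\e n/s=\Theta(\e n\log(1/\e))$, far below $(t-1)k=\Theta(n/\log(1/\e))$ for small $\e$. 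Your later trimming and minimal-$J$ step cannot repair this, because deletions only lower every $\wt(I_J)$; no $J$ ever ``becomes'' overweight. Which two sets each coordinate is kept in must be decided with the target subfamily already in mind.

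The missing idea is to choose the subfamily of codewords \emph{before} converting to a graph, at a scale matched to the multiplicities. For example, proceed as follows.
\begin{enumerate}
\item The coordinates with $s_j\geq 2$ carry $\sum_j s_j\geq t\e n/2$. Split them into the $O(\log t)$ classes $s_j\in[2^m,2^{m+1})$, and fix a class carrying a $1/O(\log t)$ fraction of this mass.
\item Keep each codeword independently with probability $2^{-m}$, obtaining $T\subseteq[t]$. Each coordinate of the chosen class lies in at least two kept sets with probability at least $1/4$; keep it in exactly two of them, giving one edge.
\item Then $\mathbb{E}[e(G[T])]=\Omega\bigl(t\e n/(2^m\log t)\bigr)$, while $\mathbb{E}[|T|k]=tk/2^m$. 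So for $c_0$ small enough, some $T$ has $e(G[T])>|T|k$, which forces $|T|\geq 2$.
\item Take an inclusion-minimal $J\subseteq T$ with $|J|\geq 2$ and $e(G[J])\geq(|J|-1)k$. Delete edges inside $J$ until exactly $(|J|-1)k$ remain, and let $I_i$ be the set of remaining edges at $i$. Minimality gives $\wt(I_{J'})\leq(|J'|-1)k$ for all nonempty $J'\subseteq J$.
\end{enumerate}
This pigeonholing over $\log t$ multiplicity scales is exactly where the $\log(1/\e)$ loss in the rate arises. It is the part of the argument your proposal does not supply.
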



The following lemma states that the determinant of every intersection matrix
appearing in \cref{lem:EvalProp} is of the form considered in \cref{thm:RSreadonce}, which in turn helps us to construct the evaluation points directly using it. 
\begin{lemma}
\label{lem:intersection-matrix-factorization}
Let $\mathbb{F}$ be a field, and let $I_1,\ldots,I_{t'}\subseteq[n]$
satisfy
\[
\wt(I_{[t']})=(t'-1)k
\text{ and }
I_a\cap I_b\cap I_c=\emptyset
~
\text{for all distinct }a,b,c\in[t'].
\]
Let $V$ be the $k\times n$ matrix defined by
$V[i,j]=x_{i,j}$
for every $(i,j)\in[k]\times[n].$

Then, there exists a subset $J\subseteq[n]$ and matrices
$
U_1,U_2\in\mathbb{F}^{|J|\times k|J|}
$
whose columns are indexed by $[k]\times J$ such that the column
matroid represented by $U_1$ is $\mathcal{M}_{k,J}$ and
\[
\det\left(M_{V,(I_1,\ldots,I_{t'})}\right)
=
\det(U_1X_JU_2^{\mathsf T}).
\]
\end{lemma}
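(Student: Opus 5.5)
The plan is to write the intersection matrix as a constant "cycle" block stacked on top of a block that factors through $X_J$, and then to remove the constant block by an invertible column operation.

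\textbf{Step 1: square shape and choice of $J$.} Since all triple intersections are empty, $\sum_{a<b}|I_a\cap I_b|=\wt(I_{[t']})=(t'-1)k$. Moreover $\binom{t'}{2}-\binom{t'-1}{2}=t'-1$. Hence $M:=M_{V,(I_1,\dots,I_{t'})}$ is square of size $\binom{t'}{2}k$.

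Set $J:=\bigcup_{a<b}(I_a\cap I_b)$. Again by emptiness of triple intersections, each $j\in J$ lies in $I_a\cap I_b$ for exactly one pair $\pi(j)=\{a,b\}$. Therefore the lower block rows are indexed bijectively by $J$, and $|J|=(t'-1)k$.

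\textbf{Step 2: factor the lower block.} The row indexed by $j$ has entry $x_{i,j}$ in column $(\pi(j),i)$ and zeros elsewhere. Define:
\begin{itemize}
\item $P\in\mathbb F^{|J|\times k|J|}$, whose row $j$ is the all-ones vector on the coordinates $[k]\times\{j\}$; this is the canonical representation of $\mathcal M_{k,J}$;
\item $R\in\mathbb F^{k|J|\times\binom{t'}{2}k}$, the $0/1$ matrix sending coordinate $(i,j)$ to column $(\pi(j),i)$.
\end{itemize}
Then the lower block equals $PX_JR$, so
\[
M=\begin{bmatrix}\mathcal B_{t'}\otimes\mathcal I_k\\ PX_JR\end{bmatrix}.
\]
For $t'=2$ there is no upper block, so we may simply take $U_1=P$ and $U_2^{\mathsf T}=R$.

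\textbf{Step 3: eliminate the constant block.} The matrix $\mathcal B_{t'}$ has full row rank, since its columns indexed by pairs inside $[t'-1]$ form an identity. Hence $\mathcal B_{t'}\otimes\mathcal I_k$ has full row rank $\binom{t'-1}{2}k$. Choose an invertible $Q\in\mathbb F^{\binom{t'}{2}k\times\binom{t'}{2}k}$ with $(\mathcal B_{t'}\otimes\mathcal I_k)Q=[\,\mathcal I\mid 0\,]$. Concretely, the first columns of $Q$ come from a right inverse, and the last $(t'-1)k$ columns form a basis $Q_2$ of the kernel; for instance, $Q_2=K\otimes\mathcal I_k$ with $K$ a basis of the cut space of $K_{t'}$.

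Then
\[
MQ=\begin{bmatrix}\mathcal I&0\\ \ast & PX_JRQ_2\end{bmatrix},
\]
so $\det M=\det(Q)^{-1}\det(PX_JRQ_2)$.

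Set $U_1:=P$, and let $U_2^{\mathsf T}$ be $RQ_2$ with one column multiplied by $\det(Q)^{-1}$. Then $\det M=\det(U_1X_JU_2^{\mathsf T})$, with $U_1,U_2\in\mathbb F^{|J|\times k|J|}$ as required.

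\textbf{Main obstacle.} I expect the only delicate part to be the bookkeeping of the column indexing, so that $U_2$ has the right shape with columns indexed by $[k]\times J$. This works because $RQ_2$ is $k|J|\times|J|$, using $|J|=(t'-1)k$. The scalar $\det(Q)^{-1}$ is harmless since it lies in $\mathbb F$. Everything else is a Schur-complement computation.
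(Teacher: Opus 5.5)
Your proof is correct and is essentially the paper's argument. The paper takes the same $J$ and the same canonical representation of $\mathcal{M}_{k,J}$ as $U_1$, and removes the $\mathcal{B}_{t'}\otimes\mathcal{I}_k$ block by the determinant-one column operation $\left[\begin{smallmatrix} I & -C\\ 0 & I\end{smallmatrix}\right]$ (with $C$ the part of $\mathcal{B}_{t'}\otimes\mathcal{I}_k$ on the columns indexed by pairs $\{a,t'\}$), which is the special case $Q_2=\left[\begin{smallmatrix} -C\\ I\end{smallmatrix}\right]$ of your $Q$. The only difference is bookkeeping: the paper computes the entries of the resulting block explicitly to read off a $\pm1$ matrix $U_2$, whereas factoring the lower block first as $PX_JR$ gives you $U_2^{\mathsf T}=RQ_2$ directly, and for the paper's $Q$ this is exactly its $U_2^{\mathsf T}$.
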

Here, we give a proof of \cref{lem:intersection-matrix-factorization}.
\begin{proof}
For $1\leq a<b\leq t'$, let $J_{a,b}:=I_a\cap I_b$, and let
\[
J:=\bigcup_{1\leq a<b\leq t'}J_{a,b}.
\]
Since the intersection of any three sets from
$I_1,\ldots,I_{t'}$ is empty, the sets $J_{a,b}$ are pairwise
disjoint. Therefore,
\[
|J|
=
\sum_{1\leq a<b\leq t'}|I_a\cap I_b|
=
\wt(I_{[t']})
=
(t'-1)k.
\]

Let
\[
s:=k\binom{t'-1}{2}
\qquad\text{and}\qquad
d:=(t'-1)k=|J|.
\]
Order the columns of $M_{V,(I_1,\ldots,I_{t'})}$ by first listing
those indexed by $\binom{[t'-1]}{2}\times[k]$, followed by those
indexed by
\[
\bigl\{(a,t'):a\in[t'-1]\bigr\}\times[k].
\]
By the definition of $\mathcal{B}_{t'}$, the first
$\binom{t'-1}{2}$ columns of $\mathcal{B}_{t'}$ form an identity
matrix. Hence, for some matrices $A_0,B_0$, and $C$, we can write
\[
M_{V,(I_1,\ldots,I_{t'})}
=
\begin{bmatrix}
I_s & C\\
A_0 & B_0
\end{bmatrix}.
\]
Performing determinant-preserving column operations gives
\[
\begin{bmatrix}
I_s & C\\
A_0 & B_0
\end{bmatrix}
\begin{bmatrix}
I_s & -C\\
0   & I_d
\end{bmatrix}
=
\begin{bmatrix}
I_s & 0\\
A_0 & B_0-A_0C
\end{bmatrix}.
\]
Let $B:=B_0-A_0C$. Since the matrix used for the column operations
has determinant one, we get
\[
\det\left(M_{V,(I_1,\ldots,I_{t'})}\right)=\det(B).
\]

We next describe the entries of $B$. Let
$\mathcal{E}_0:=\binom{[t'-1]}{2}$ and write
\[
\mathcal{B}_{t'}
=
\begin{bmatrix}
I_{\binom{t'-1}{2}} & C_0
\end{bmatrix}.
\]
By the definition of $\mathcal{B}_{t'}$, for every
$(u,v)\in\mathcal{E}_0$ and $a\in[t'-1]$,
\[
C_0[(u,v),a]
=
\begin{cases}
-1 & \text{if }a=u,\\
 1 & \text{if }a=v,\\
 0 & \text{otherwise}.
\end{cases}
\]
Moreover, $C=C_0\otimes I_k$.

Since the intersection of any three of the sets
$I_1,\ldots,I_{t'}$ is empty, for every $j\in J$, there is a unique
pair $p(j)<q(j)$ such that
\[
j\in I_{p(j)}\cap I_{q(j)}.
\]

\begin{claim}\label{claim:entries-of-B}
Index the rows of $B$ by $J$ and its columns by
$[t'-1]\times[k]$, where $(a,i)$ corresponds to the original column
indexed by $((a,t'),i)$. Then, for every $j\in J$, $a\in[t'-1]$,
and $i\in[k]$,
\[
B[j,(a,i)]
=
\begin{cases}
 x_{i,j} & \text{if }a=p(j),\\
-x_{i,j} & \text{if }a=q(j),\\
 0       & \text{otherwise}.
\end{cases}
\]
\end{claim}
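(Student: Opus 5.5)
We will compute $B=B_0-A_0C$ directly by identifying the blocks $A_0$, $B_0$, and $C$ explicitly. The rows of the bottom part of $M_{V,(I_1,\ldots,I_{t'})}$ are indexed by pairs $(\{a,b\},j)$ with $j\in I_a\cap I_b$. Since the sets $J_{a,b}$ are pairwise disjoint, these rows are indexed by $j\in J$ alone, with $\{a,b\}=\{p(j),q(j)\}$. In the block-diagonal matrix $\diag(V[I_a\cap I_b]^{\mathsf T})$, the row for $j$ is supported on the $k$ columns indexed by $(\{p(j),q(j)\},i)$ for $i\in[k]$, and its entry there is $x_{i,j}$. We split into two cases according to whether $q(j)=t'$.

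\emph{Case $q(j)=t'$.} Then row $j$ of $A_0$ is zero, since its support lies in the columns $((p(j),t'),i)$. Hence row $j$ of $B$ equals row $j$ of $B_0$, which has entry $x_{i,j}$ at column $(p(j),i)$ and zero elsewhere. This matches the claim: $a=p(j)$ gives $x_{i,j}$, while $a=q(j)=t'$ is not a column index in $[t'-1]$, so that case never arises.

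\emph{Case $q(j)<t'$.} Then row $j$ of $B_0$ is zero, and row $j$ of $A_0$ has entry $x_{i,j}$ at column $((p(j),q(j)),i)$. The product $A_0C$ with $C=C_0\otimes I_k$ therefore has entry
\[
(A_0C)[j,(a,i)]=x_{i,j}\,C_0[(p(j),q(j)),a],
\]
because the Kronecker factor $I_k$ matches index $i$ to index $i$. By the stated formula for $C_0$, this equals $-x_{i,j}$ if $a=p(j)$, $x_{i,j}$ if $a=q(j)$, and $0$ otherwise. Negating gives $B[j,(a,i)]=x_{i,j}$ at $a=p(j)$ and $-x_{i,j}$ at $a=q(j)$, as claimed.

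The only delicate step is bookkeeping: we must verify the ordering and indexing conventions, namely that the Kronecker product $\mathcal{B}_{t'}\otimes I_k$ is indexed by (pair, $i$) consistently with the block-diagonal part, and that the sign convention for $C_0$ inherited from $\mathcal{B}_{t'}$ is as stated. The row for $\{u,v\}$ has $-1$ at $\{u,t'\}$ and $+1$ at $\{v,t'\}$ with $u<v$, giving $C_0[(u,v),u]=-1$ and $C_0[(u,v),v]=1$. Once this is checked, both cases are immediate.
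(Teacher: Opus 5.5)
Your proof is correct and follows the paper's argument essentially step for step. It uses the same case split on whether $q(j)=t'$, with row $j$ of $A_0$ vanishing in the first case and row $j$ of $B_0$ vanishing in the second, and the Kronecker structure $C=C_0\otimes I_k$ reducing $(A_0C)[j,(a,i)]$ to $x_{i,j}\,C_0[(p(j),q(j)),a]$ exactly as in the paper, including the sign convention $C_0[(u,v),u]=-1$, $C_0[(u,v),v]=1$ read off from the definition of $\mathcal{B}_{t'}$.
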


\begin{proof}
Suppose first that $q(j)=t'$. In this case, the $j$th row of $A_0$
is zero, whereas
\[
B_0[j,(a,i)]
=
\begin{cases}
x_{i,j} & \text{if }a=p(j),\\
0       & \text{otherwise}.
\end{cases}
\]
Since $B=B_0-A_0C$, we get
\[
B[j,(a,i)]
=
\begin{cases}
x_{i,j} & \text{if }a=p(j),\\
0       & \text{otherwise}.
\end{cases}
\]
This agrees with the statement of the claim, since
$a\in[t'-1]$ and $q(j)=t'$.

Now suppose that $q(j)<t'$. In this case, the $j$th row of $B_0$
is zero. Furthermore, for every $(u,v)\in\mathcal{E}_0$ and
$i'\in[k]$,
\[
A_0[j,((u,v),i')]
=
\begin{cases}
x_{i',j}
  & \text{if }(u,v)=(p(j),q(j)),\\
0 & \text{otherwise}.
\end{cases}
\]
Since $C=C_0\otimes I_k$, the only nonzero term in the computation
of $(A_0C)[j,(a,i)]$ is the term corresponding to
$(u,v)=(p(j),q(j))$ and $i'=i$. Hence,
\begin{align*}
(A_0C)[j,(a,i)]
&=
x_{i,j}C_0[(p(j),q(j)),a]\\
&=
\begin{cases}
-x_{i,j} & \text{if }a=p(j),\\
 x_{i,j} & \text{if }a=q(j),\\
 0       & \text{otherwise}.
\end{cases}
\end{align*}
Since the $j$th row of $B_0$ is zero and $B=B_0-A_0C$, it follows
that
\[
B[j,(a,i)]
=
\begin{cases}
 x_{i,j} & \text{if }a=p(j),\\
-x_{i,j} & \text{if }a=q(j),\\
 0       & \text{otherwise}.
\end{cases}
\]
This completes the proof of the claim.
\end{proof}

We now write $B$ in the form $U_1X_JU_2^{\mathsf T}$. Define the
$|J|\times k|J|$ matrix $U_1$, whose rows are indexed by $J$ and
whose columns are indexed by $[k]\times J$, by
\[
U_1[j,(i,j')]
=
\begin{cases}
1 & \text{if }j=j',\\
0 & \text{otherwise}.
\end{cases}
\]
For every $j\in J$, the columns indexed by
$[k]\times\{j\}$ are all equal to the $j$th standard basis vector.
Hence, the column matroid represented by $U_1$ is
$\mathcal{M}_{k,J}$.

Define the $|J|\times k|J|$ matrix $U_2$, whose rows are indexed by
$[t'-1]\times[k]$ and whose columns are indexed by $[k]\times J$,
by
\[
U_2[(a,i),(i',j)]
=
\begin{cases}
 1 & \text{if }i=i'\text{ and }a=p(j),\\
-1 & \text{if }i=i'\text{ and }a=q(j),\\
 0 & \text{otherwise}.
\end{cases}
\]
For every $j\in J$, $a\in[t'-1]$, and $i\in[k]$, we have
\begin{align*}
(U_1X_JU_2^{\mathsf T})[j,(a,i)]
&=
\sum_{\substack{i'\in[k]\\j'\in J}}
U_1[j,(i',j')]x_{i',j'}U_2[(a,i),(i',j')]\\
&=
\sum_{i'\in[k]}x_{i',j}U_2[(a,i),(i',j)]\\
&=
\begin{cases}
 x_{i,j} & \text{if }a=p(j),\\
-x_{i,j} & \text{if }a=q(j),\\
 0       & \text{otherwise}.
\end{cases}
\end{align*}
By \cref{claim:entries-of-B}, this is equal to $B[j,(a,i)]$.
Therefore,
\[
B=U_1X_JU_2^{\mathsf T}.
\]
Consequently,
\[
\det\left(M_{V,(I_1,\ldots,I_{t'})}\right)
=
\det(U_1X_JU_2^{\mathsf T}).
\]
This completes the proof of \cref{lem:intersection-matrix-factorization}.
\end{proof}

\begin{proof}[Proof of \cref{thm:beyondJohnson}]
Let $c_0$ be the absolute constant from \cref{lem:EvalProp}, and set
\[
k=
\left\lfloor
\frac{c_0\e n}{\log(1/\e)+1}
\right\rfloor
\text{ and }
t=\left\lceil\frac{2}{\e}\right\rceil.
\]
For sufficiently large $n$, we have $k\geq1$ and
$\frac{k}{n}
=
\Omega\left(
\frac{\e}{\log(1/\e)+1}
\right)$.

Let $C,D\leq(nk)^{O(\log(nk))}$ be the positive integers from
\cref{thm:RSreadonce}. Choose a power of two $q$ such that
\[
\max\{C,n\}\leq q<2\max\{C,n\},
\]
and let $r=D$. By \cref{thm:RSreadonce}, there exist pairwise distinct elements
$\alpha_1,\ldots,\alpha_n\in\mathbb{F}_{q^r}$ satisfying the
conclusion of that theorem.

We show that $(\alpha_1,\ldots,\alpha_n)$ satisfies the
condition of \cref{lem:EvalProp}. Let $t'\leq t$, and let
$I_1,\ldots,I_{t'}\subseteq[n]$ satisfy all the conditions appearing
in \cref{lem:EvalProp}. Suppose that $\det(M_{V,(I_1,\ldots,I_{t'})})$ is not the zero polynomial.

By \cref{lem:intersection-matrix-factorization}, there exist a subset
$J\subseteq[n]$ and matrices
$U_1,U_2\in\mathbb{F}_q^{|J|\times k|J|}$ such that the column
matroid represented by $U_1$ is $\mathcal{M}_{k,J}$ and
\[
\det\left(M_{V,(I_1,\ldots,I_{t'})}\right)
=
\det(U_1X_JU_2^{\mathsf T}).
\]
In particular, $\det(U_1X_JU_2^{\mathsf T})$ is not the zero
polynomial. Hence, by \cref{thm:RSreadonce},
\[
\left.
\det(U_1X_JU_2^{\mathsf T})
\right|_{x_{i,j}=\alpha_j^{\,i-1}}
\neq 0.
\]
Therefore,
\[
\left.
\det\left(M_{V,(I_1,\ldots,I_{t'})}\right)
\right|_{x_{i,j}=\alpha_j^{\,i-1}}
\neq 0.
\]

Conversely, if
$\det(M_{V,(I_1,\ldots,I_{t'})})$ is the zero polynomial, then it
remains zero under every substitution. Thus, $(\alpha_1,\ldots,\alpha_n)$  satisfies the
condition of \cref{lem:EvalProp}.

It follows from \cref{lem:EvalProp} that the RS code with evaluation
points $\alpha_1,\ldots,\alpha_n$ has rate
$R=\frac{k}{n}
=
\Omega\left(
\frac{\e}{\log(1/\e)+1}
\right)$
and is $(1-\e,O(1/\e))$ list decodable.

Finally, $|\mathbb{F}_{q^r}|
=
q^r
\leq
\left(2\max\{C,n\}\right)^D
\leq
(nk)^{\operatorname{quasi\text{-}poly}(nk)}$.
The evaluation points can be constructed in
$\operatorname{quasi\text{-}poly}(nk)$ time by \cref{thm:RSreadonce}. This
completes the proof.
\end{proof}
\section{Proof of \texorpdfstring{\cref{thm:MDS3}}{Theorem~3}}\label{sec:RSMDS3}

In this section, we give a  deterministic construction of MDS$(3)$ Reed--Solomon codes.  For this, we use an equivalent characterization of these codes using the following matrices by \cite{BDG23}. 
\paragraph*{\texorpdfstring{$\ell$}{ell}-MDS Matrices}
\begin{definition}
    For a $k\times n$ matrix $V$ and $\ell\geq 2$ subsets $A_1,A_2,\dots,A_{\ell}\subseteq [n]$, the \emph{$\ell$-MDS} matrix $N_{V,(A_1,\dots,A_{\ell})}$ is the $\ell k\times \left(k+\sum_{i=1}^{\ell}|A_i|\right)$ matrix defined as follows:
    \[N_{V,(A_1,\dots,A_{\ell})}=\begin{bmatrix}
    I_k & V[A_1] & && \\
I_k & & V[A_2] & &\\
\vdots & & & \ddots & \\
I_k &  & & & V[A_{\ell}]
    \end{bmatrix}.\]
\end{definition}

The following theorem gives the equivalent condition for an RS code to be an $[n,k]$-MDS$(\ell)$ code using $\ell$-MDS matrices.
\begin{theorem}[\cite{BGM24}]\label{thm:lMDSCondition}
    Let $k,n,\ell$ be positive integers with $k\leq n$ and $\ell\geq 2$. Let $x_{i,j}$, $(i,j)\in[k]\times[n]$, be distinct variables and $V$ be the $k\times n$ matrix with $V[i,j]=x_{i,j}$. Let $\alpha_1,\dots ,\alpha_n\in \mathbb{F}$ be pairwise distinct such that for any $\ell$ subsets $A_1,A_2,\dots, A_{\ell}$ of $[n]$ with $|A_1|+\cdots +|A_{\ell}|=(\ell-1)k$, $N_{V,(A_1,\dots,A_{\ell})}$ is full rank over $\mathbb F(x_{i,j}:(i,j)\in[k]\times[n])$ if and only if it is full rank over $\mathbb{F}$ on substituting \[x_{i,j}\text{ with }\alpha_j^{i-1} \text{ for each }(i,j)\in [k]\times [n].\] Then,
    the RS code with evaluation points $\alpha_1,\dots,\alpha_n$ is $[n,k]$-MDS$(\ell)$.
   \end{theorem}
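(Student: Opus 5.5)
The plan is to translate the MDS$(\ell)$ condition, which concerns dimensions of intersections of column spans, into a statement about ranks of the matrices $N_{V,(A_1,\dots,A_\ell)}$. Then we reduce from arbitrary tuples $(A_1,\dots,A_\ell)$ to the \emph{balanced} tuples with $|A_1|+\cdots+|A_\ell|=(\ell-1)k$, which are exactly the tuples for which $N_{V,(A_1,\dots,A_\ell)}$ is square.

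\textbf{Step 1 (kernel description).} For any $k\times n$ matrix $G$, consider the linear map
\[
\ker N_{G,(A_1,\dots,A_\ell)}\to\bigcap_{i}G_{A_i},\qquad (c,y_1,\dots,y_\ell)\mapsto -c .
\]
A vector lies in the kernel iff $c+G[A_i]y_i=0$ for every $i$. So the map is well defined and surjective, and its kernel is $\bigoplus_i\ker G[A_i]$. Hence
\[
\dim\bigcap_i G_{A_i}=\bigl(k+\textstyle\sum_i|A_i|-\rank N_{G,(A_1,\dots,A_\ell)}\bigr)-\textstyle\sum_i\bigl(|A_i|-\rank G[A_i]\bigr).
\]
The same identity holds for the generic matrix $W=V$ over the function field. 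Under the substitution $x_{i,j}=\alpha_j^{i-1}$, each $V[A_i]$ is a Vandermonde matrix with distinct nodes, so $\rank G[A_i]=\min(k,|A_i|)=\rank W[A_i]$. Therefore the MDS$(\ell)$ property is equivalent to
\[
\rank N_{G,(A_1,\dots,A_\ell)}=\rank N_{V,(A_1,\dots,A_\ell)}\quad\text{for all tuples }(A_1,\dots,A_\ell).
\]
Substitution can only lower rank, so only the inequality $\rank N_G\ge\rank N_V$ needs proof.

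\textbf{Step 2 (reduction to balanced tuples).} This is the step I expect to be the main obstacle. I would first simplify the tuple.
\begin{itemize}
\item If some $|A_i|\ge k$, then $V[A_i]$ has full row rank both generically and after substitution. The $i$th block row can then be eliminated, and the instance reduces to $\ell-1$ sets after replacing $A_j$ by suitable sets. Alternatively, one can note that $G_{A_i}$ is the whole space and drop it.
\item One can also assume the sets are ``tight'', meaning no element can be removed without changing the intersection.
\end{itemize}
For the remaining tuples, I would take the generic rank $\rho$ and try to realize it by a nonzero maximal minor of a \emph{balanced} instance. There are two cases.
\begin{itemize}
\item If $\sum|A_i|>(\ell-1)k$, delete columns from $V[A_i]$'s, i.e., shrink the $A_i$, while keeping the generic rank, until the column count matches. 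A column-deleted matrix is again of the form $N_{V,(A'_1,\dots)}$.
\item If $\sum|A_i|<(\ell-1)k$, enlarge the $A_i$ by fresh indices. I would then argue, using the matroid exchange property of the columns of the generic $N_V$, that an enlargement exists whose square matrix is generically nonsingular and contains an independent column set of $N_V$ of size $\rho$.
\end{itemize}
The delicate point is that the rank deficit must be witnessed by columns only, since deleting rows of $N$ does not preserve the block form. To handle this, I would use the explicit generic-rank formula of BGM for $\dim\bigcap W_{A_i}$ to control which columns to add or delete.

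\textbf{Step 3 (conclusion).} For a balanced tuple, the hypothesis says that full rank is preserved under the substitution. So the generically nonsingular square instance produced in Step 2 stays nonsingular after substitution. It contains the original columns spanning rank $\rho$, so $\rank N_G\ge\rho$. Combining this with Step 1 gives equality of all intersection dimensions, i.e., the code is $[n,k]$-MDS$(\ell)$. Distinctness of the $\alpha_j$ is used only in Step 1, for the Vandermonde ranks.
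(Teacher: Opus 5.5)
The paper does not prove this statement; it cites it from Brakensiek--Gopi--Makam. So your argument has to stand on its own. Step 1 is correct and is the right reformulation: it gives $\dim\bigcap_i G_{A_i}=k-\rank N_{G,(A_1,\dots,A_\ell)}+\sum_i\rank G[A_i]$. Vandermonde ranks with distinct nodes match generic ranks, so MDS$(\ell)$ is equivalent to $\rank N_{G,(A_1,\dots,A_\ell)}\ge\rank N_{V,(A_1,\dots,A_\ell)}$ for every tuple. Step 3 is also fine.

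There is a gap in Step 2, which carries all the content and is left as a plan.

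\emph{The first branch fails as described.} If $\sum_i|A_i|>(\ell-1)k$, shrinking the $A_i$ while preserving generic rank yields a balanced tuple whose $\ell k\times\ell k$ matrix still has generic rank $\rho$. When $\rho<\ell k$ this matrix is generically singular, and the hypothesis only constrains generically nonsingular balanced instances, so it says nothing.

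\emph{The second branch is incomplete.} It has the same problem unless dependent columns are discarded first. It also never shows that a generically nonsingular enlargement exists. Deferring to the BGM intersection-dimension formula to ``control which columns to add or delete'' leaves the key step unproved.

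What is missing is a simple spanning observation. It makes the case split unnecessary, along with the $|A_i|\ge k$ reduction (which would also need padding back to $\ell$ sets before the hypothesis applies), tightness, and the BGM formula.

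\begin{itemize}
\item \textbf{Setup.} Let $K=\mathbb{F}(x_{i,j}:(i,j)\in[k]\times[n])$ and let $c_1,\dots,c_k$ be the columns of the $I_k$ block column. For $i\in[\ell]$ and $j\in[n]$, let $d_{i,j}\in K^{\ell k}$ carry the $j$th column of $V$ in block row $i$ and zeros elsewhere.
\item \textbf{Basis of the column space.} The $c_r$ are independent, so extend them to a basis $\mathcal S=\{c_r\}_{r\in[k]}\cup\{d_{i,j}: i\in[\ell],\ j\in A_i'\}$ of the column space of $N_{V,(A_1,\dots,A_\ell)}$, with $A_i'\subseteq A_i$ and $|\mathcal S|=\rho$.
\item \textbf{Spanning observation.} Since $k\le n$, the columns of $V$ span $K^k$. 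Hence the vectors $d_{i,j}$ with $i\in[\ell]$ and $j\in[n]$ span $K^{\ell k}$.
\item \textbf{Extension to a balanced instance.} Therefore $\mathcal S$ extends to a basis of $K^{\ell k}$ by adding vectors $d_{i,j}$. Up to column order, this basis is the column set of $N_{V,(A_1'',\dots,A_\ell'')}$ for some $A_i''\supseteq A_i'$. Because the basis has $\ell k$ elements, $\sum_i|A_i''|=(\ell-1)k$, and this matrix is nonsingular over $K$.
\item \textbf{Applying the hypothesis.} By hypothesis, $N_{G,(A_1'',\dots,A_\ell'')}$ is nonsingular. The specialization of $\mathcal S$ is a set of columns of both $N_{G,(A_1'',\dots,A_\ell'')}$ and $N_{G,(A_1,\dots,A_\ell)}$, so it is linearly independent. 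This gives $\rank N_{G,(A_1,\dots,A_\ell)}\ge\rho$ for every tuple.
\end{itemize}

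Together with your Step 1, this completes the proof.
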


   For $\ell=3$, Brakensiek, Dhar, and Gopi~\cite[Corollary 42 and Remark 43]{BDG23} further simplified the condition
and showed that it is sufficient to consider only  $3$-MDS matrices $N_{V,(A_1,A_2,A_3)}$ for pairwise
disjoint sets, possibly for an RS code of smaller dimension. We use
the following consequence of their result. 
\begin{lemma}[\cite{BDG23}]\label{lem:3MDSCondition}
    Let $k\leq n$ be two positive integers. Let $x_{i,j}$, $(i,j)\in[k]\times[n]$, be distinct variables and $V$ be the $k\times n$ matrix with $V[i,j]=x_{i,j}$. For each $k'\in[k]$, let
$V^{(k')}$ denote the submatrix of $V$ consisting of its first $k'$
rows.

Let $\alpha_1,\dots,\alpha_n\in\mathbb{F}$ be pairwise distinct such that,
for every $k'\in[k]$ and every three pairwise disjoint subsets
$A_1,A_2,A_3\subseteq[n]$ satisfying
$|A_i|\leq k'$ for each $i\in[3]$ and $|A_1|+|A_2| +|A_{3}|=2k'$, $N_{V^{(k')},(A_1,A_2,A_{3})}$ is full rank over $\mathbb F(x_{i,j}:(i,j)\in[k']\times[n])$ if and only if it is full rank over $\mathbb{F}$ on substituting \[x_{i,j}\text{ with }\alpha_j^{i-1} \text{ for each }(i,j)\in [k']\times [n].\] Then,
    the RS code with evaluation points $\alpha_1,\dots,\alpha_n$ is $[n,k]$-MDS$(3)$.
   \end{lemma}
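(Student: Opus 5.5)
The plan is to derive the lemma from \cref{thm:lMDSCondition} with $\ell=3$ by a reduction that strips away degenerate configurations. The reduction should use only the Reed--Solomon structure of the substituted matrix and the corresponding property of the generic matrix $V$. \Cref{thm:lMDSCondition} asks us to check every triple $A_1,A_2,A_3\subseteq[n]$ with $|A_1|+|A_2|+|A_3|=2k$. Throughout I will use the equivalent description of full rank of $N_{V,(A_1,A_2,A_3)}$ (a $3k\times 3k$ matrix) through the MDS$(3)$ formulation. Full rank holds exactly when there are no nontrivial $(y,z_1,z_2,z_3)$ with $y+V[A_i]z_i=0$ for all $i$. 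Equivalently, the spans $V_{A_1},V_{A_2},V_{A_3}$ have trivial common intersection and the columns within each $A_i$ are independent.

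\emph{Step 1: sets that are too large.} Suppose some $|A_i|>k'$. Then the block $V[A_i]$ has dependent columns both generically and after substitution. Hence $N$ is singular on both sides, and the equivalence holds trivially. So we may assume $|A_i|\le k'$.

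\emph{Step 2: removing shared evaluation points.} Suppose $j\in A_a\cap A_b$ for some $a\neq b$. Think of vectors in the column span as the codewords of the dual picture: polynomials $f$ with $\deg f<k'$ evaluated at the points.
\begin{itemize}
\item Membership in the intersection forces the vanishing condition at $\alpha_j$ to be shared. The subspace of polynomials with $f(\alpha_j)=0$ equals $\{(X-\alpha_j)g:\deg g<k'-1\}$.
\item This subspace is an RS code of dimension $k'-1$ on the remaining points, with column multipliers $(\alpha_m-\alpha_j)$. These multipliers are nonzero because the $\alpha$'s are distinct.
\item Column scalings do not affect rank. So the triple $(A_1,A_2,A_3)$ in dimension $k'$ reduces to the triple $(A_1\setminus\{j\},A_2\setminus\{j\},A_3\setminus\{j\})$ in dimension $k'-1$, that is, for $V^{(k'-1)}$.
\item The same reduction is valid for the generic matrix, after a generic change of variables, so both sides of the "full rank iff" statement are transformed consistently.
\end{itemize}
I would handle the cases $j$ in exactly two sets and $j$ in all three separately, tracking how the size identity $\sum|A_i|=2k'$ is preserved or becomes strict. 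When it becomes strict, I would argue that the matrix is non-square or trivially rank-deficient on both sides.

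\emph{Step 3: iterate.} Iterating Step 2 ends at pairwise disjoint sets in some dimension $k'\le k$ with $|A_i|\le k'$ and $\sum|A_i|=2k'$. This is exactly the hypothesis of the lemma. Finally, invoke \cref{thm:lMDSCondition}.

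The main obstacle is Step 2. The degree-lowering identity must be matched exactly with the generic side: the generic rank after removing $j$ must equal the generic rank for $V^{(k'-1)}$. My plan there is to argue that the generic intersection dimensions are given by the combinatorial formula of \cite{BGM24} (MDS$(\ell)$ dimensions depend only on the set sizes and overlaps), and that this formula drops by exactly the same amount under the removal. This is the content of \cite[Corollary~42 and Remark~43]{BDG23}, which I would follow closely.
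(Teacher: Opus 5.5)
Your outline is correct, and it is the reduction the paper imports without proof from \cite[Corollary~42 and Remark~43]{BDG23}. Triples with some $|A_i|>k'$ are singular on both sides. A point $j$ lying in two or three of the sets can be shortened away: in any left-kernel relation $f_1+f_2+f_3=0$ (with $\deg f_i<k'$ and $f_i$ vanishing on $A_i$), all three $f_i$ must vanish at $\alpha_j$, so dividing by $X-\alpha_j$ identifies the left kernel with that of the instance $(A_i\setminus\{j\})_i$ for $V^{(k'-1)}$. On the generic side the same identification works with $w_j^{\perp}$, and the quotient of a generic matrix is again generic after a triangular change of variables. The remaining loose ends are harmless: a step with $|A_c|=k'$ for the third index $c$ lands in your Step~1, and the iteration can end at $k'=0$, which lies outside the lemma's range $k'\in[k]$ but is trivially nonsingular on both sides.
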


By \cref{lem:3MDSCondition}, it suffices to verify the required
condition for every $k'\leq k$. We present the argument for $k'=k$. The following lemma states that the determinant of every $3$-MDS matrix appearing
in \cref{lem:3MDSCondition} is of the form considered in
\cref{thm:RSreadonce}. This, in turn, allows us to directly use \cref{thm:RSreadonce} to construct the evaluation points. 

The case $k'<k$ follows from the same argument together with the
padding idea used in the proof of \cref{thm:RSreadonce}. Indeed, after applying
\cref{lem:3MDS-matrix-factorization} with $k'$ in place of $k$, we
add, in each part indexed by $j\in J$, $k-k'$ parallel columns to
the first matrix and $k-k'$ zero columns to the second matrix. The
first padded matrix represents $\mathcal{M}_{k,J}$, while the added
columns make no contribution to the matrix product. Hence, the
determinant remains unchanged, and \cref{thm:RSreadonce} applies.

\begin{lemma}\label{lem:3MDS-matrix-factorization}
Let $\mathbb{F}$ be a field, and let $A_1,A_2,A_3\subseteq[n]$ be
pairwise disjoint subsets satisfying
$|A_1|+|A_2|+|A_3|=2k$. Let $V$ be the $k\times n$ matrix defined by
$V[i,j]=x_{i,j}$ for every $(i,j)\in[k]\times[n]$. Then, for
$J:=A_1\cup A_2\cup A_3$, there exist matrices
$U_1,U_2\in\mathbb{F}^{|J|\times k|J|}$ whose columns are indexed by
$[k]\times J$ such that the column matroid represented by $U_1$ is
$\mathcal{M}_{k,J}$ and
\[
\det\left(N_{V,(A_1,A_2,A_3)}\right)
=
\det(U_1X_JU_2^{\mathsf T}).
\]
\end{lemma}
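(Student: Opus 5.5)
We will mirror the proof of \cref{lem:intersection-matrix-factorization}: first use block elimination to reduce $\det(N_{V,(A_1,A_2,A_3)})$ to the determinant of a $2k\times 2k$ matrix. We then observe that every row of (the transpose of) that matrix is supported on the variables of a single column $j\in J$, with a fixed sign pattern depending only on which $A_b$ contains $j$. Such a matrix factors as $U_1X_JU_2^{\mathsf T}$, with $U_1$ the canonical representation of $\mathcal M_{k,J}$.

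\emph{Step 1 (elimination).} The matrix $N:=N_{V,(A_1,A_2,A_3)}$ is square of order $3k=k+2k$. Subtracting the first block row from the second and third block rows (a determinant-preserving operation) gives
\[
\begin{bmatrix}
I_k & V[A_1] & 0 & 0\\
0 & -V[A_1] & V[A_2] & 0\\
0 & -V[A_1] & 0 & V[A_3]
\end{bmatrix}.
\]
The first block column is now $(I_k,0,0)^{\mathsf T}$. Expanding along it (equivalently, using the column operation from the proof of \cref{lem:intersection-matrix-factorization}) yields $\det(N)=\det(B')$, where
\[
B'=\begin{bmatrix}-V[A_1] & V[A_2] & 0\\ -V[A_1] & 0 & V[A_3]\end{bmatrix}.
\]
This is a $2k\times 2k$ matrix whose columns are indexed by $J=A_1\sqcup A_2\sqcup A_3$; here we use that the $A_b$ are pairwise disjoint and that $|J|=2k$. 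Let $B:=B'^{\mathsf T}$, so that $\det B=\det N$. Index the rows of $B$ by $J$ and its columns by $\{1,2\}\times[k]$.

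\emph{Step 2 (entries of $B$).} Define signs $c_1(j),c_2(j)\in\{0,\pm1\}$ by
\[
(c_1(j),c_2(j))=
\begin{cases}
(-1,-1) & \text{if } j\in A_1,\\
(1,0) & \text{if } j\in A_2,\\
(0,1) & \text{if } j\in A_3.
\end{cases}
\]
Then $B[j,(b,i)]=c_b(j)\,x_{i,j}$ for every $j\in J$, $b\in\{1,2\}$ and $i\in[k]$.

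\emph{Step 3 (factorization).} Take the same $U_1$ as before, namely $U_1[j,(i',j')]=1$ if $j=j'$ and $0$ otherwise. Its columns in part $[k]\times\{j\}$ all equal the $j$th standard basis vector, so the column matroid of $U_1$ is $\mathcal M_{k,J}$. Define
\[
U_2[(b,i),(i',j)]=
\begin{cases}
c_b(j) & \text{if } i=i',\\
0 & \text{otherwise}.
\end{cases}
\]
The same one-line computation as in \cref{lem:intersection-matrix-factorization} gives $(U_1X_JU_2^{\mathsf T})[j,(b,i)]=\sum_{i'}x_{i',j}U_2[(b,i),(i',j)]=c_b(j)x_{i,j}=B[j,(b,i)]$. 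Hence $B=U_1X_JU_2^{\mathsf T}$, and therefore $\det N=\det(U_1X_JU_2^{\mathsf T})$.

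The main obstacle is purely bookkeeping: all row and column operations must have determinant exactly $1$, and the row and column orderings of $B$ must match those implicit in $U_1X_JU_2^{\mathsf T}$, since a reordering could introduce a sign. If a global sign $-1$ appears, we will absorb it by negating one row of $U_2$. This is harmless because $|J|=2k\geq 2$, so $U_2$ has a row, and negating a row of $U_2$ changes neither the matroid represented by $U_1$ nor the hypotheses of \cref{thm:RSreadonce}. In any case the applications only need nonvanishing, which is insensitive to sign.
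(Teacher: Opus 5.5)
Your proposal is correct and matches the paper's proof essentially step for step: the same row elimination down to the $2k\times 2k$ block $\begin{bmatrix}-V[A_1] & V[A_2] & 0\\ -V[A_1] & 0 & V[A_3]\end{bmatrix}$, the passage to its transpose, the canonical $U_1$, and the same sign pattern in $U_2$. Your sign-absorption remark is unnecessary, since indexing the rows of $U_1$ and $U_2$ exactly as the rows and columns of $B$ already gives $B=U_1X_JU_2^{\mathsf T}$, but it does no harm.
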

\begin{proof}
Since $A_1,A_2,A_3$ are pairwise disjoint, we have
$|J|=|A_1|+|A_2|+|A_3|=2k$. By definition,
\[
N_{V,(A_1,A_2,A_3)}
=
\begin{bmatrix}
I_k & V[A_1] & 0       & 0\\
I_k & 0       & V[A_2] & 0\\
I_k & 0       & 0       & V[A_3]
\end{bmatrix}.
\]
For every $i\in[k]$, subtract the $i$th row from the $(k+i)$th and
$(2k+i)$th rows. These row operations preserve the determinant and
give
\[
\begin{bmatrix}
I_k & V[A_1]  & 0       & 0\\
0   & -V[A_1] & V[A_2] & 0\\
0   & -V[A_1] & 0       & V[A_3]
\end{bmatrix}.
\]
Therefore,
\[
\det\left(N_{V,(A_1,A_2,A_3)}\right)=\det(B),
\]
where
\[
B:=
\begin{bmatrix}
-V[A_1] & V[A_2] & 0\\
-V[A_1] & 0       & V[A_3]
\end{bmatrix}.
\]
We index the rows of $B$ by $[2]\times[k]$ and its columns by $J$,
where the elements of $A_1$, $A_2$, and $A_3$ are listed in this
order.

We now write $B^{\mathsf T}$ in the form
$U_1X_JU_2^{\mathsf T}$. Define the $|J|\times k|J|$ matrix $U_1$,
whose rows are indexed by $J$ and whose columns are indexed by
$[k]\times J$, by
\[
U_1[j,(i,j')]
=
\begin{cases}
1 & \text{if }j=j',\\
0 & \text{otherwise}.
\end{cases}
\]
For every $j\in J$, all the columns indexed by
$[k]\times\{j\}$ are equal to the $j$th standard basis vector.
Hence, the column matroid represented by $U_1$ is
$\mathcal{M}_{k,J}$.

Define the $|J|\times k|J|$ matrix $U_2$, whose rows are indexed by
$[2]\times[k]$ and whose columns are indexed by $[k]\times J$, by
\[
U_2[(a,i),(i',j)]
=
\begin{cases}
-1 & \text{if }i=i'\text{ and }j\in A_1,\\
 1 & \text{if }i=i',\,a=1,\text{ and }j\in A_2,\\
 1 & \text{if }i=i',\,a=2,\text{ and }j\in A_3,\\
 0 & \text{otherwise}.
\end{cases}
\]
For every $j\in J$, $a\in[2]$, and $i\in[k]$, we have
\begin{align*}
(U_1X_JU_2^{\mathsf T})[j,(a,i)]
&=
\sum_{\substack{i'\in[k]\\j'\in J}}
U_1[j,(i',j')]x_{i',j'}U_2[(a,i),(i',j')]\\
&=
\sum_{i'\in[k]}x_{i',j}U_2[(a,i),(i',j)]\\
&=
\begin{cases}
-x_{i,j} & \text{if }j\in A_1,\\
 x_{i,j} & \text{if }a=1\text{ and }j\in A_2,\\
 x_{i,j} & \text{if }a=2\text{ and }j\in A_3,\\
 0       & \text{otherwise}.
\end{cases}
\end{align*}
By the definition of $B$, this is equal to
$B^{\mathsf T}[j,(a,i)]$. Therefore,
\[
B^{\mathsf T}=U_1X_JU_2^{\mathsf T}.
\]
Consequently,
\[
\det\left(N_{V,(A_1,A_2,A_3)}\right)
=
\det(B)
=
\det(B^{\mathsf T})
=
\det(U_1X_JU_2^{\mathsf T}).
\]
\end{proof}

\begin{proof}[Proof of \cref{thm:MDS3}]
Let $C,D\leq(nk)^{O(\log(nk))}$ be the positive integers from
\cref{thm:RSreadonce}. Choose a power of two $q$ such that
\[
\max\{C,n\}\leq q<2\max\{C,n\},
\]
and let $r=D$. By \cref{thm:RSreadonce}, there exist pairwise distinct elements
$\alpha_1,\ldots,\alpha_n\in\mathbb{F}_{q^r}$ satisfying the
conclusion of \cref{thm:RSreadonce}.

We show that $\alpha=(\alpha_1,\ldots,\alpha_n)$ satisfies the
condition of \cref{lem:3MDSCondition}. 
We present the argument for $k'=k$; for $k'<k$, the same argument
applies with $k$ replaced by $k'$, using the padding construction
preceding \cref{lem:3MDS-matrix-factorization}.
Let $A_1,A_2,A_3\subseteq[n]$
be pairwise disjoint subsets satisfying
$|A_1|+|A_2|+|A_3|=2k$, and let $V$ be the $k\times n$ matrix
defined by $V[i,j]=x_{i,j}$.

Suppose that $N_{V,(A_1,A_2,A_3)}$ is full rank over
$\mathbb{F}_{q^r}(x_{i,j}:(i,j)\in[k]\times[n])$. Equivalently,
$\det(N_{V,(A_1,A_2,A_3)})$ is not the zero polynomial. By
\cref{lem:3MDS-matrix-factorization}, for
$J=A_1\cup A_2\cup A_3$, there exist matrices
$U_1,U_2\in\mathbb{F}_q^{|J|\times k|J|}$ such that the column
matroid represented by $U_1$ is $\mathcal{M}_{k,J}$ and
\[
\det\left(N_{V,(A_1,A_2,A_3)}\right)
=
\det(U_1X_JU_2^{\mathsf T}).
\]
In particular, $\det(U_1X_JU_2^{\mathsf T})$ is not the zero
polynomial. Hence, by \cref{thm:RSreadonce},
\[
\left.
\det(U_1X_JU_2^{\mathsf T})
\right|_{x_{i,j}=\alpha_j^{\,i-1}}
\neq 0.
\]
Therefore,
\[
\left.
\det\left(N_{V,(A_1,A_2,A_3)}\right)
\right|_{x_{i,j}=\alpha_j^{\,i-1}}
\neq 0.
\]
Thus, $N_{V,(A_1,A_2,A_3)}$ remains full rank after the substitution
$x_{i,j}=\alpha_j^{i-1}$ for every $(i,j)\in[k]\times[n]$.

Conversely, if $N_{V,(A_1,A_2,A_3)}$ is not full rank over the
rational function field, then its determinant is the zero polynomial
and remains zero under every substitution. Hence, $\alpha$ satisfies
the condition of \cref{lem:3MDSCondition}.

It follows from \cref{lem:3MDSCondition} that the RS code
with evaluation points $\alpha_1,\ldots,\alpha_n$ is an
$[n,k]$-MDS$(3)$ code.

Finally, $|\mathbb{F}_{q^r}|
=
q^r
\leq
\left(2\max\{C,n\}\right)^D
\leq
(nk)^{\operatorname{quasi\text{-}poly}(nk)}$.
The evaluation points can be constructed in
$\operatorname{quasi\text{-}poly}(nk)$ time by \cref{thm:RSreadonce}. This
completes the proof.
\end{proof}

\paragraph*{AI Disclosure:}
All mathematical ideas, results, proofs, and technical arguments in this
paper are due to the authors. Generative AI tools were used to assist with
language editing, correcting grammatical and typographical errors, improving
the presentation and organization of the manuscript, and suggesting clearer
wording for some definitions and proofs. The authors independently checked
and take full responsibility for all mathematical statements and arguments.
\bibliographystyle{alpha}
\bibliography{biblio}

@article{DBLP:journals/siamcomp/GuoLSTW24,
  author       = {Zeyu Guo and
                  Ray Li and
                  Chong Shangguan and
                  Itzhak Tamo and
                  Mary Wootters},
  title        = {Improved List-Decodability and List-Recoverability of {R}eed-{S}olomon
                  Codes via Tree Packings},
  journal      = {{SIAM} J. Comput.},
  volume       = {53},
  number       = {2},
  pages        = {389--430},
  year         = {2024},
  url          = {https://doi.org/10.1137/21m1463707},
  doi          = {10.1137/21M1463707},
  bibsource    = {dblp computer science bibliography, https://dblp.org}
}

@book{Oxl06,
 author = {Oxley, James G.},
 title = {Matroid Theory (Oxford Graduate Texts in Mathematics)},
 year = {2006},
 isbn = {0199202508},
 publisher = {Oxford University Press, Inc.},
 address = {New York, NY, USA},
}

@book{Sch03B,
   title = "Combinatorial optimization : polyhedra and efficiency. Vol. {B}. , Matroids, trees, stable sets. chapters 39-69",
   author = "Schrijver, Alexander",
   series = "Algorithms and combinatorics",
   publisher = "Springer-Verlag",
   address = "Berlin, Heidelberg, New York, N.Y., et al.",
   url = "http://opac.inria.fr/record=b1124843",
   isbn = "3-540-44389-4",
   year = 2003
}

@article{DBLP:journals/cc/GurjarT20,
  author       = {Rohit Gurjar and
                  Thomas Thierauf},
  title        = {Linear Matroid Intersection is in Quasi-{NC}},
  journal      = {Comput. Complex.},
  volume       = {29},
  number       = {2},
  pages        = {9},
  year         = {2020},
  url          = {https://doi.org/10.1007/s00037-020-00200-z},
  doi          = {10.1007/S00037-020-00200-Z},
  bibsource    = {dblp computer science bibliography, https://dblp.org}
}

@article{STV01,
  author  = {Madhu Sudan and Luca Trevisan and Salil Vadhan},
  title   = {Pseudorandom Generators without the {XOR} Lemma},
  journal = {Journal of Computer and System Sciences},
  volume  = {62},
  number  = {2},
  pages   = {236--266},
  year    = {2001},
  doi     = {10.1006/jcss.2000.1730}
}

@article{Singleton64,
  author  = {Richard C. Singleton},
  title   = {Maximum Distance {$q$}-nary Codes},
  journal = {IEEE Transactions on Information Theory},
  volume  = {10},
  number  = {2},
  pages   = {116--118},
  year    = {1964},
  doi     = {10.1109/TIT.1964.1053661}
}

@inproceedings{Elias57,
  author    = {Peter Elias},
  title     = {List Decoding for Noisy Channels},
  booktitle = {IRE WESCON Convention Record, Part 2},
  pages     = {94--104},
  year      = {1957},
  publisher = {Institute of Radio Engineers},
  url       = {https://dspace.mit.edu/handle/1721.1/4484}
}

@techreport{Wozencraft58,
  author      = {John M. Wozencraft},
  title       = {List Decoding},
  institution = {Research Laboratory of Electronics,
                 Massachusetts Institute of Technology},
  type        = {Quarterly Progress Report},
  number      = {48},
  year        = {1958},
}

@inproceedings{GL89,
  author    = {Oded Goldreich and Leonid A. Levin},
  title     = {A Hard-Core Predicate for All One-Way Functions},
  booktitle = {Proceedings of the Twenty-First Annual ACM Symposium on Theory of Computing},
  pages     = {25--32},
  publisher = {ACM},
  year      = {1989},
  doi       = {10.1145/73007.73010}
}

@inproceedings{CPS99,
  author    = {Jin-Yi Cai and Aduri Pavan and D. Sivakumar},
  title     = {On the Hardness of Permanent},
  booktitle = {STACS 99},
  series    = {Lecture Notes in Computer Science},
  volume    = {1563},
  pages     = {90--99},
  publisher = {Springer},
  year      = {1999},
  doi       = {10.1007/3-540-49116-3_8}
}

@article{Tre01,
  author  = {Luca Trevisan},
  title   = {Extractors and Pseudorandom Generators},
  journal = {Journal of the ACM},
  volume  = {48},
  number  = {4},
  pages   = {860--879},
  year    = {2001},
  doi     = {10.1145/502090.502099}
}

@inproceedings{AGS03,
  author    = {Adi Akavia and Shafi Goldwasser and Shmuel Safra},
  title     = {Proving Hard-Core Predicates Using List Decoding},
  booktitle = {Proceedings of the 44th Annual IEEE Symposium on Foundations of Computer Science},
  pages     = {146--157},
  publisher = {IEEE Computer Society},
  year      = {2003},
  doi       = {10.1109/SFCS.2003.1238189}
}

@article{Uma03,
  author  = {Christopher Umans},
  title   = {Pseudo-random Generators for All Hardnesses},
  journal = {Journal of Computer and System Sciences},
  volume  = {67},
  number  = {2},
  pages   = {419--440},
  year    = {2003},
  doi     = {10.1016/S0022-0000(03)00046-1}
}

@article{SSW03,
  author  = {Alice Silverberg and Jessica Staddon and Judy L. Walker},
  title   = {Applications of List Decoding to Tracing Traitors},
  journal = {IEEE Transactions on Information Theory},
  volume  = {49},
  number  = {5},
  pages   = {1312--1318},
  year    = {2003},
  doi     = {10.1109/TIT.2003.810630}
}

@article{TZ04,
  author  = {Amnon Ta-Shma and David Zuckerman},
  title   = {Extractor Codes},
  journal = {IEEE Transactions on Information Theory},
  volume  = {50},
  number  = {12},
  pages   = {3015--3025},
  year    = {2004},
  doi     = {10.1109/TIT.2004.838377}
}

@article{SU05,
  author  = {Ronen Shaltiel and Christopher Umans},
  title   = {Simple Extractors for All Min-Entropies and a New Pseudorandom Generator},
  journal = {Journal of the ACM},
  volume  = {52},
  number  = {2},
  pages   = {172--216},
  year    = {2005},
  doi     = {10.1145/1059513.1059516},
}

@article{TZS06,
  author  = {Amnon Ta-Shma and David Zuckerman and Shmuel Safra},
  title   = {Extractors from {Reed--Muller} Codes},
  journal = {Journal of Computer and System Sciences},
  volume  = {72},
  number  = {5},
  pages   = {786--812},
  year    = {2006},
  doi     = {10.1016/j.jcss.2005.05.010},
}

@article{IJK09,
  author  = {Russell Impagliazzo and Ragesh Jaiswal and Valentine Kabanets},
  title   = {Approximate List-Decoding of Direct Product Codes and Uniform Hardness Amplification},
  journal = {SIAM Journal on Computing},
  volume  = {39},
  number  = {2},
  pages   = {564--605},
  year    = {2009},
  doi     = {10.1137/070683994}
}

@article{Uma09,
  author  = {Christopher Umans},
  title   = {Reconstructive Dispersers and Hitting Set Generators},
  journal = {Algorithmica},
  volume  = {55},
  number  = {1},
  pages   = {134--156},
  year    = {2009},
  doi     = {10.1007/s00453-008-9266-z},
}

@article{GUV09,
  author  = {Venkatesan Guruswami and Christopher Umans and Salil Vadhan},
  title   = {Unbalanced Expanders and Randomness Extractors from {Parvaresh--Vardy} Codes},
  journal = {Journal of the ACM},
  volume  = {56},
  number  = {4},
  pages   = {1--34},
  year    = {2009},
  doi     = {10.1145/1538902.1538904}
}

@article{ST23,
  author  = {Chong Shangguan and Itzhak Tamo},
  title   = {Generalized {Singleton} Bound and List-Decoding
             {Reed--Solomon} Codes Beyond the {Johnson} Radius},
  journal = {SIAM Journal on Computing},
  volume  = {52},
  number  = {3},
  pages   = {684--717},
  year    = {2023},
  doi     = {10.1137/20M138795X},
}

@article{Rot22,
  author  = {Ron M. Roth},
  title   = {Higher-Order {MDS} Codes},
  journal = {IEEE Transactions on Information Theory},
  volume  = {68},
  number  = {12},
  pages   = {7798--7816},
  year    = {2022},
}

@article{Joh62,
  author  = {Selmer M. Johnson},
  title   = {A New Upper Bound for Error-Correcting Codes},
  journal = {IRE Transactions on Information Theory},
  volume  = {8},
  number  = {3},
  pages   = {203--207},
  year    = {1962},
  doi     = {10.1109/TIT.1962.1057714}
}

@unpublished{GS01,
  author = {Venkatesan Guruswami and Madhu Sudan},
  title  = {Extensions to the {Johnson} Bound},
  note   = {Manuscript},
  year   = {2001},
  url    = {https://people.csail.mit.edu/madhu/papers/2001/johnson.pdf}
}

@inproceedings{RW14,
  author    = {Atri Rudra and Mary Wootters},
  title     = {Every List-Decodable Code for High Noise Has
               Abundant Near-Optimal Rate Puncturings},
  booktitle = {Proceedings of the 46th Annual ACM Symposium
               on Theory of Computing},
  pages     = {764--773},
  year      = {2014},
  publisher = {ACM}
}

@article{FKS22,
  author  = {Asaf Ferber and Matthew Kwan and Lisa Sauermann},
  title   = {List-Decodability with Large Radius
             for {Reed--Solomon} Codes},
  journal = {IEEE Transactions on Information Theory},
  volume  = {68},
  number  = {6},
  pages   = {3823--3828},
  year    = {2022}
}

@article{GST23,
  author  = {Eitan Goldberg and Chong Shangguan and Itzhak Tamo},
  title   = {List-Decoding and List-Recovery of {Reed--Solomon}
             Codes Beyond the {Johnson} Radius for Every Rate},
  journal = {IEEE Transactions on Information Theory},
  volume  = {69},
  number  = {4},
  pages   = {2261--2268},
  year    = {2023},
  doi     = {10.1109/TIT.2022.3222877}
}

@article{BGM24,
  author  = {Joshua Brakensiek and Sivakanth Gopi and Visu Makam},
  title   = {Generic {Reed--Solomon} Codes Achieve List-Decoding Capacity},
  journal = {SIAM Journal on Computing},
  pages   = {STOC23-118--STOC23-154},
  year    = {2024},
  doi     = {10.1137/23M1598064}
}

@inproceedings{GZ23,
  author    = {Zeyu Guo and Zihan Zhang},
  title     = {Randomly Punctured {Reed--Solomon} Codes Achieve
               the List Decoding Capacity over Polynomial-Size Alphabets},
  booktitle = {2023 IEEE 64th Annual Symposium on Foundations
               of Computer Science (FOCS)},
  pages     = {164--176},
  year      = {2023},
  publisher = {IEEE},
  doi       = {10.1109/FOCS57990.2023.00019}
}

@inproceedings{AGL24,
  author    = {Omar Alrabiah and Venkatesan Guruswami and Ray Li},
  title     = {Randomly Punctured {Reed--Solomon} Codes Achieve
               List-Decoding Capacity over Linear-Sized Fields},
  booktitle = {Proceedings of the 56th Annual ACM Symposium
               on Theory of Computing},
  pages     = {1458--1469},
  year      = {2024},
  publisher = {ACM},
  doi       = {10.1145/3618260.3649634}
}

@article{BGM22,
  author  = {Joshua Brakensiek and Sivakanth Gopi and Visu Makam},
  title   = {Lower Bounds for Maximally Recoverable Tensor Codes
             and Higher Order {MDS} Codes},
  journal = {IEEE Transactions on Information Theory},
  volume  = {68},
  number  = {11},
  pages   = {7125--7140},
  year    = {2022},
  doi     = {10.1109/TIT.2022.3187366}
}

@ARTICLE{BDG23,
  author={Brakensiek, Joshua and Dhar, Manik and Gopi, Sivakanth},
  journal={IEEE Transactions on Information Theory}, 
  title={Improved Field Size Bounds for Higher Order MDS Codes}, 
  year={2024},
  volume={70},
  number={10},
  pages={6950-6960},
  doi={10.1109/TIT.2024.3449030}}

@techreport{AGG26,
  author      = {Omar Alrabiah and Rohan Goyal and
                 Venkatesan Guruswami},
  title       = {Explicit {Reed--Solomon} Subspace Designs:
                 Optimal List and Line Decoding},
  institution = {Electronic Colloquium on Computational Complexity},
  number      = {TR26-262},
  year        = {2026}
}

@misc{BCPZZ26,
  author       = {Joshua Brakensiek and Yeyuan Chen and Aaron Putterman
                  and Zihan Zhang and Kai Zhe Zheng},
  title        = {Algorithmic List Decoding of {Reed--Solomon} Codes
                  up to Capacity},
  howpublished = {arXiv preprint arXiv:2609.08005},
  year         = {2026},
  doi          = {10.48550/arXiv.2609.08005},
  url          = {https://arxiv.org/abs/2609.08005}
}

@inproceedings{DSY14,
  author    = {Son Hoang Dau and Wentu Song and Chau Yuen},
  title     = {On the Existence of {MDS} Codes over Small Fields
               with Constrained Generator Matrices},
  booktitle = {2014 IEEE International Symposium on Information Theory
               (ISIT)},
  pages     = {1787--1791},
  publisher = {IEEE},
  year      = {2014},
  doi       = {10.1109/ISIT.2014.6875141}
}

@inproceedings{Lov18,
  author    = {Shachar Lovett},
  title     = {{MDS} Matrices over Small Fields:
               A Proof of the {GM-MDS} Conjecture},
  booktitle = {2018 IEEE 59th Annual Symposium on Foundations
               of Computer Science (FOCS)},
  pages     = {194--199},
  publisher = {IEEE},
  year      = {2018},
  doi       = {10.1109/FOCS.2018.00027}
}

@article{YH19,
  author  = {Hikmet Yildiz and Babak Hassibi},
  title   = {Optimum Linear Codes with Support-Constrained
             Generator Matrices over Small Fields},
  journal = {IEEE Transactions on Information Theory},
  volume  = {65},
  number  = {12},
  pages   = {7868--7875},
  year    = {2019},
  doi     = {10.1109/TIT.2019.2932663}
}

@inproceedings{GHKSSY17,
  author    = {Parikshit Gopalan and Guangda Hu and Swastik Kopparty
               and Shubhangi Saraf and Carol Wang and Sergey Yekhanin},
  title     = {Maximally Recoverable Codes for Grid-like Topologies},
  booktitle = {Proceedings of the Twenty-Eighth Annual ACM-SIAM
               Symposium on Discrete Algorithms},
  pages     = {2092--2108},
  publisher = {SIAM},
  year      = {2017},
  doi       = {10.1137/1.9781611974782.136}
}

\end{document}